\tikzstyle{block} = [rectangle, draw, 
\tikzstyle{rec} = [rectangle, draw]
\tikzstyle{line} = [draw, -latex]
\DeclareFontFamily{OMX}{MnSymbolE}{}
\DeclareFontShape{OMX}{MnSymbolE}{m}{n}{
    <-6>  MnSymbolE5
   <6-7>  MnSymbolE6
   <7-8>  MnSymbolE7
   <8-9>  MnSymbolE8
   <9-10> MnSymbolE9
  <10-12> MnSymbolE10
  <12->   MnSymbolE12}{}
\DeclareSymbolFont{mnlargesymbols}{OMX}{MnSymbolE}{m}{n}
\DeclareMathDelimiter{\llangle}{\mathopen}{mnlargesymbols}{'164}{mnlargesymbols}{'164}
\DeclareMathDelimiter{\rrangle}{\mathclose}{mnlargesymbols}{'171}{mnlargesymbols}{'171}
\newtheorem{lemm}{Lemma}
\newtheorem{defi}{Definition}
\newtheorem{exam}{Example}
\newtheorem{theo}{Theorem}
\newtheorem{prot}{Protocol}
\newtheorem*{prot*}{Protocol}
\newtheorem{prop}{Proposition}
\newtheorem{coro}{Corollary}
\theoremstyle{definition}
\newtheorem{remark}{Remark}
\newtheorem*{remark*}{Remark}
\newcommand{\cA}{\mathcal{A}}
\newcommand{\cB}{\mathcal{B}}
\newcommand{\cC}{\mathcal{C}}
\newcommand{\cD}{\mathcal{D}}
\newcommand{\cE}{\mathcal{E}}
\newcommand{\cM}{\mathcal{M}}
\newcommand{\cP}{\mathcal{P}}
\newcommand{\cQ}{\mathcal{Q}}
\newcommand{\cR}{\mathcal{R}}
\newcommand{\cS}{\mathcal{S}}
\newcommand{\cT}{\mathcal{T}}
\newcommand{\cU}{\mathcal{U}}
\newcommand{\cX}{\mathcal{X}}
\newcommand{\cY}{\mathcal{Y}}
 \renewcommand{\Im}{\operatorname{Im}}
\DeclareMathOperator{\spann}{span}
\DeclareMathOperator{\rowspan}{rowspan}
\DeclareMathOperator{\id}{id}
\DeclareMathOperator{\pr}{Pr}
\DeclareMathOperator*{\argmax}{\arg\!\max}
\DeclareMathOperator*{\argmin}{\arg\!\min}
\DeclareMathOperator*{\rank}{\mathrm{rank}}
\DeclarePairedDelimiter\px{\{}{\}}
\DeclarePairedDelimiter\paren{(}{)}
\newcommand\sA{\mathsf{A}}
\newcommand\sB{\mathsf{B}}
\newcommand\sE{\mathsf{E}}
\newcommand\sG{\mathsf{G}}
\newcommand\sH{\mathsf{H}}
\newcommand\sI{\mathsf{I}}
\newcommand\sJ{\mathsf{J}}
\newcommand\sK{\mathsf{K}}
\newcommand\sO{\mathsf{O}}
\newcommand\sQ{\mathsf{Q}}
\newcommand\sS{\mathsf{S}}
\newcommand\sT{\mathsf{T}}
\newcommand\bx{\mathbf{x}}
\newcommand\saa{\mathsf{a}}
\newcommand\sbb{\mathsf{b}}
\newcommand\see{\mathsf{e}}
\newcommand\sff{\mathsf{f}}
\newcommand\skk{\mathsf{k}}
\newcommand\smm{\mathsf{m}}
\newcommand\snn{\mathsf{n}}
\newcommand\srr{\mathsf{r}}
\newcommand\stt{\mathsf{t}}
\newcommand\sx{\mathsf{x}}
\newcommand\sy{\mathsf{y}}
\newcommand\sz{\mathsf{z}}
\newcommand\PSPIR{\Phi^{\smm}_{\mathrm{SPIR}}}
\newcommand\PNSS{\Phi^{\smm}_{\mathrm{NSS}}}
\newcommand\hPSPIR{\hat{\Phi}^{\smm}_{\mathrm{SPIR}}}
\newcommand\hPNSS{\hat{\Phi}^{\smm}_{\mathrm{NSS}}}
\newcommand\hPMMSP{\hat{\cP}}
\newcommand\fA{\mathfrak{A}}
\newcommand\fB{\mathfrak{B}}
\newcommand\REC{\mathfrak{A}}
\newcommand\COL{\mathfrak{B}}
\newcommand\ACC{\mathfrak{A}}
\newcommand\REJ{\mathfrak{B}}
\newcommand\FF{\mathbb{F}}
\newcommand\Fq{\mathbb{F}_q}
\newcommand\RSS{R_{\mathrm{SS}}}
\newcommand\RPIR{R_{\mathrm{PIR}}}
\newcommand\RMMSP{R_{\mathrm{MMSP}}}
\newcommand\RandSPIR{U_{\mathrm{SPIR}}}
\newcommand\RandNSS{U_{\mathrm{NSS}}}
\newcommand\SRandSPIR{\cU_{\mathrm{SPIR}}}
\newcommand\SRandNSS{\cU_{\mathrm{NSS}}}
\newcommand\mast{m_{[2:\sff]}^\ast}
\newcolumntype{M}[1]{>{\centering\arraybackslash}m{#1}}
\newcolumntype{N}{@{}m{0pt}@{}}
\begin{document}

\title{Equivalence of Non-Perfect Secret Sharing and Symmetric Private Information Retrieval with General Access Structure}

\author{Seunghoan~Song,~\IEEEmembership{Member,~IEEE}
        and
        Masahito~Hayashi,~\IEEEmembership{Fellow,~IEEE}
\thanks{S. Song is supported by JSPS Grant-in-Aid for JSPS Fellows No. JP20J11484.  
M. Hayashi is supported in part by Guangdong Provincial Key Laboratory (Grant No. 2019B121203002),
a JSPS Grant-in-Aids for Scientific Research (A) No.17H01280 and for Scientific Research (B) No.16KT0017, and Kayamori Foundation of Information Science Advancement.
This article was presented in part at Proceedings of 2021 IEEE International Symposium on Information Theory \cite{SH21}.}
\thanks{S. Song is with Graduate school of Mathematics, Nagoya University, Nagoya, 464-8602, Japan
(e-mail: m17021a@math.nagoya-u.ac.jp).}
\thanks{M. Hayashi is with 
Shenzhen Institute for Quantum Science and Engineering, Southern University of Science and Technology,
Shenzhen, 518055, China,
Guangdong Provincial Key Laboratory of Quantum Science and Engineering,
Southern University of Science and Technology, Shenzhen 518055, China,
Shenzhen Key Laboratory of Quantum Science and Engineering, Southern
University of Science and Technology, Shenzhen 518055, China,
and Graduate School of Mathematics, Nagoya University, Nagoya, 464-8602, Japan
(e-mail:hayashi@sustech.edu.cn).}
}

\maketitle

\begin{abstract}
We study the equivalence between non-perfect secret sharing (NSS) and symmetric private information retrieval (SPIR) with arbitrary response and collusion patterns. 
NSS and SPIR are defined with an access structure, which corresponds to the authorized/forbidden sets for NSS and the response/collusion patterns for SPIR.
We prove the equivalence between NSS and SPIR in the following two senses. 
1) Given any SPIR protocol with an access structure, an NSS protocol is constructed with the same access structure and the same rate. 
2) Given any linear NSS protocol with an access structure, a linear SPIR protocol is constructed with the same access structure and the same rate.
We prove the first relation even if the SPIR protocol has imperfect correctness and secrecy.
From the first relation, 
    we derive an upper bound of the SPIR capacity for arbitrary response and collusion patterns. 
For the special case of $\mathsf{n}$-server SPIR with $\mathsf{r}$ responsive and $\mathsf{t}$ colluding servers,
    this upper bound proves that the SPIR capacity is $(\mathsf{r}-\mathsf{t})/\mathsf{n}$.
From the second relation, we prove that a SPIR protocol exists for any response and collusion patterns.
\end{abstract}


%
\IEEEpeerreviewmaketitle

\section{Introduction}

\subsection{Nonperfect Secret Sharing and Symmetric Private Information Retrieval}

	Secret sharing (SS) and private information retrieval (PIR) are two extensively studied cryptographic protocols.
	SS \cite{Shamir79,Blakley79} considers the problem in which a dealer encodes a secret into $\snn$ shares so that some subsets of shares can reconstruct the secret but the other subsets have no information of the secret.
	PIR \cite{CGKS98} considers the problem in which a user retrieves one of the multiple files from server(s) without revealing which file is retrieved.
	Since PIR with one server has no efficient solution \cite{CGKS98}, it has been extensively studied with multiple non-communicating servers, and thus, in the following, we simply denote multi-server PIR by PIR.
	SS and PIR have a similar structure because 
		the secrecy of both protocols is obtained by 
			partitioning the confidential information.
	On the other hand, the two protocols have a different structure because
        in SS, the secret is both the confidential and targeted information 
        but in PIR, the targeted file is not confidential.
	From the similarity, there have been several studies to construct PIR from secret sharing \cite{GIKM00, BIKO12,LMD17,YSL18, DER18}.
	However, the relation between these two protocols has not been clearly discussed.
	
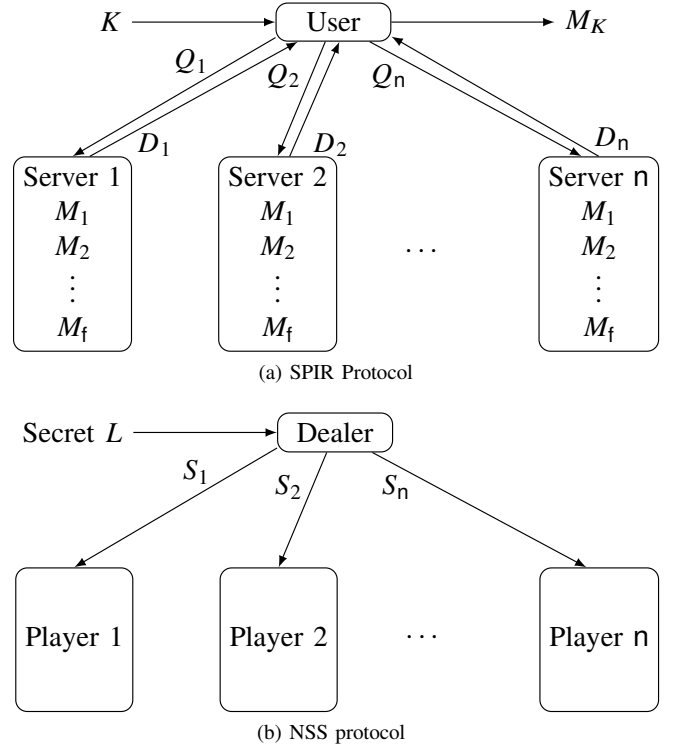
\begin{figure}
\subfloat[SPIR Protocol]{
\centering
\resizebox {1\linewidth} {!} {
\begin{tikzpicture}[node distance = 3.3cm, every text node part/.style={align=center}, auto]
    \node [block] (user) {User};
    \node [left=5em of user] (sent) {${K}$};
    \node [block,minimum height = 5em, below left=4em and 5em of user] (serv1) {$\text{Server 1}$\\{$M_1$}\\{$M_2$}\\\vdots\\{$M_{\sff}$}};
    \node [block,minimum height = 5em, right=3em of serv1] (serv2) {$\text{Server 2}$\\{$M_1$}\\{$M_2$}\\\vdots\\{$M_{\sff}$}};
    \node [right=2em of serv2] (ten) {\textbf{$\cdots$}};
    \node [block,minimum height = 5em, right=3em of ten] (servn) {$\text{Server }{\snn}$\\{$M_1$}\\{$M_2$}\\\vdots\\{$M_{\sff}$}}; 

    \node [right=2cm of user] (receiv) {${M_{K}}$}; 
    
    
    \path [line] (sent) -- (user);
    
    \path [line] (user.195) --node[pos=0.2,left=2mm] {${Q_1}$} (serv1.north);
    \path [line] (user) --node[pos=0.3,left] {${Q_2}$} (serv2.north);
    \path [line] (user) --node[pos=0.3,left=2mm] {${Q_{\snn}}$} (servn.100);
    
    \path [line] (serv1.80)--node[pos=0.1,right=2mm] {{$D_1$}} (user);
    \path [line] (serv2.83) --node[pos=0.1,right=1mm] {{$D_2$}} (user.287);
    \path [line] (servn.north) --node[pos=0.15,right=2mm] {{$D_{\snn}$}} (user.346);
    
    \path [line] (user.east) --  (receiv);
\end{tikzpicture}
}
} \\
\subfloat[NSS protocol]{
\centering
\resizebox{1\linewidth}{!}{
\begin{tikzpicture}[node distance = 3.3cm, every text node part/.style={align=center}, auto]
    \node [block] (user) {Dealer};
    \node [left=5em of user] (sent) {Secret $L$};
    \node [block,minimum height = 5em, below left=4em and 5em of user] (serv1) {$\text{Player }1$};
    \node [block,minimum height = 5em, right=3em of serv1] (serv2) {$\text{Player }2$};
    \node [right=2em of serv2] (ten) {\textbf{$\cdots$}};
    \node [block,minimum height = 5em, right=3em of ten] (servn) {$\text{Player }{\snn}$};
    
    \path [line] (sent) -- (user);
    
    \path [line] (user.195) --node[pos=0.2,left=2mm] {${S_1}$} (serv1.north);
    \path [line] (user) --node[pos=0.3,left] {${S_2}$} (serv2.north);
    \path [line] (user) --node[pos=0.3,left=2mm] {${S_{\snn}}$} (servn.100);
    
\end{tikzpicture}
}}
\caption{NSS protocol and SPIR protocol with access structure $(\fA,\fB)$.
    Let $\snn = 3$, $\fA = \{  \{2,3\}, \{1,2,3\} \}$, and $\fB = \{\emptyset,  \{1\},  \{2\}, \{3\}, \{1,2\} \}$.
    Since $\{2,3\}\in\fA$, $M_K$ ($L$) can be reconstructed from the answers $D_2,D_3$ (shares $S_2,S_3$).
    Since $\{1,2\}\in\fB$, no information of $K$ ($L$) can be extracted from the answers $D_1,D_2$ (shares $S_1,S_2$).
    } \label{fig:compintro}
\end{figure}

\begin{table}[t] 
\begin{center}
\caption{Comparison of SPIR and NSS} \label{tab:compintro}

\begin{tabular}{|c|c|c|}
                        \hline
                        & SPIR & NSS    \\
                        \hline
Messages & $\sff$ files &  one secret  \\
                        \hline
Participants & one user \& $\snn$ servers & one dealer \& $\snn$ players  \\
                        \hline
\multirow{3}{*}{Security constraints} & Correctness &  Correctness \\
& Server secrecy &  Secrecy \\
& User secrecy  & \\
\hline
Communication & Two-way & One-way  \\
    \hline
\end{tabular}

\end{center}
\end{table}

In this paper, we study the relation between an extended class of SS, called {\em non-perfect SS (NSS)}, and that of PIR, called {\em symmetric PIR (SPIR)}.
NSS is first discussed by \cite{BM85,Yamamoto86} with thresholds
    and is extended for general access structures \cite{OKT93,OK95,Paillier98,Yoshida19, FHKP17}. 
NSS with general access structures is defined with two collections $\ACC, \REJ \subset 2^{[\snn]}\coloneqq 2^{\{1,\ldots,\snn\}}$
	which represent the {\em authorized sets} and the {\em forbidden sets}, respectively.
The shares indexed by any element of $\ACC$ can reconstruct the secret
	but those indexed by any element of $\REJ$ have no information of the secret.
	We denote NSS with $\ACC$ and $\REJ$ by $(\ACC,\REJ,\snn)$-NSS.
The (perfect) SS corresponds to the $(\ACC,\REJ,\snn)$-NSS with $\ACC \cup \REJ = 2^{[\snn]}$.

SPIR \cite{GIKM00} is a variant of PIR in which the user only obtains the targeted file but no information of the other files.
The paper \cite{GIKM00} proved that shared randomness of servers is necessary to achieve SPIR.
This paper considers SPIR with arbitrary response and collusion patterns $\REC,\COL\subset 2^{[\snn]}$. 
We define a $(\REC,\COL,\snn,\sff)$-SPIR protocol by an $\snn$-server $\sff$-file protocol with the following conditions:
		i) the user correctly recovers the targeted file
			even if only the servers indexed by $\cA\in \REC$ answer to the user;
		ii) the identity of the retrieved file is not leaked even if the servers indexed by $\cB\in\COL$ collude;
		and
		iii) the user obtains no other information than the targeted file.
If $\REC=  \{ [\snn] \}$ and $\COL = \{ \{i\} \in i \in [\snn] \}$,
	$(\REC,\COL,\snn,\sff)$-SPIR is the SPIR in which no servers collude and all servers respond, 
		which has been discussed in \cite{GIKM00, SJ17-2}.

The efficiency of a PIR protocol is evaluated by the {\em PIR rate}
	\begin{align}
	\RPIR = \frac{\text{(Size of the targeted file)}}{\text{(Total size of the answers)}} ,
	\label{defeq:PIR}
	\end{align}
and the optimal efficiency is derived as the supremum of the PIR rate, which is called the {\em PIR capacity} \cite{SJ17}, and has been studied in many papers.
The paper \cite{JSJ17} derived the capacity of PIR with disjoint collusion patterns, which is a special case of the arbitrary collusion patterns. 
This model is extended to the arbitrary collusion patterns by \cite{TGKFHE17,ZG17,SJ18-col}, where PIR protocols are constructed on coded data storage. %
On the replicated data storage, the capacity of PIR with arbitrary collusion patterns \cite{YLK20}
    and the capacity of SPIR with arbitrary collusion and eavesdropping patterns \cite{CLK20}
    are characterized by the solution of a linear programming problem. 
For the PIR/SPIR with thresholds, i.e., $\fA$ ($\fB$) consists of all subsets with cardinality at least $\srr$ (at most $\stt$),
    the papers \cite{BU19-2}, \cite{WS18}, and \cite{HFLH21} derived
    the capacity of PIR with $\stt$ colluding and $\sbb$ byzantine servers,
    the capacity of SPIR with $\stt$ colluding and $\see$ eavesdropping servers with an assumption on the shared randomness of the servers,
    and 
    the PIR/SPIR capacity with $\stt$ colluding, $\see$ eavesdropping, $\srr$ responsive, and $\sbb$ byzantine servers, under various assumptions on the protocols and the shared randomness, respectively.
The paper \cite{TGKFH19} constructed PIR/SPIR protocols with $\stt$ colluding, $\srr$ responsive, and $\sbb$ byzantine servers. 
However, no existing study discussed SPIR with arbitrary response and collusion patterns.

\subsection{Main Results}

As the first result, this paper shows that $(\REC,\COL,\snn,\sff)$-SPIR implies $(\ACC,\REJ,\snn)$-NSS.
	To state this result, we formally define $(\REC,\COL,\snn,\sff)$-SPIR and $(\ACC,\REJ,\snn)$-NSS protocols with incomplete security by the measures of correctness and secrecy.
	With abuse of notation, we prove that given an $(\REC,\COL,\snn,\sff)$-SPIR protocol with nearly complete security, we propose a method to construct an $(\ACC,\REJ,\snn)$-NSS protocol with nearly complete security.
	The proof idea is simply described as follows. 
	To construct an $(\ACC,\REJ,\snn)$-NSS protocol from a given $(\REC,\COL,\snn,\sff)$-SPIR protocol, the dealer simulates the $(\REC,\COL,\snn,\sff)$-SPIR protocol 
		while setting the secret of NSS as one of the files,
		and encodes the $\snn$ answers as $\snn$ shares. 
	Then, the shares indexed by $\ACC$ reconstruct the secret by the SPIR's correctness.
    For the secrecy part, we prove that the shares indexed by $\COL$ have no information of the secret from the SPIR's two secrecy conditions.


    When the SPIR and NSS protocols have complete correctness and secrecy,
        these protocols are called {\em completely secure} and denoted by CSSPIR and CSNSS protocols, respectively.
	One interesting corollary of our first result is 
		an upper bound on the CSSPIR capacity with arbitrary response and collusion patterns.
	Similar to the PIR rate defined in \eqref{defeq:PIR}, the SS rate\footnote[1]{The efficiency of SS is often considered with {\em information rate} and {\em information ratio} \cite{Beimel11}.
	The information rate is defined by replacing the denominator of \eqref{RSSdef} by the maximum size of a share, and the information ratio is the inverse of the information rate.
	However, we define the SS rate for the correspondence with the PIR rate.}
	 $\RSS$ is defined as 
	\begin{align}
	\RSS = \frac{\text{(Size of the secret)}}{\text{(Total size of the shares)}}.
	\label{RSSdef}
	\end{align}
	In our conversion from SPIR to NSS, 
		any SPIR protocol with PIR rate $\RPIR$ is converted into an NSS protocol with SS rate $\RSS = \RPIR$.
	Furthermore, any $(\ACC,\REJ,\snn)$-CSNSS protocols satisfy $\snn\RSS \leq \delta(\ACC,\REJ) \coloneqq \min \{ |\cA-\cB| \mid \cA\in \ACC, \ \cB\in \REJ \}$ \cite{OKT93,OK95,Paillier98}.
	Thus, we obtain $\snn\RPIR \leq \delta(\ACC,\REJ)$ for any $(\REC,\COL,\snn,\sff)$-CSSPIR protocols.
    This is the first result to characterize the CSSPIR capacity of arbitrary collusion and response patterns.

   %

	As a special case, when $\REC$ ($\COL$) consists of all subsets of $[\snn]$ with cardinality at least $\srr$ (at most $\stt$),
		we obtain $\delta(\ACC,\REJ) = \srr-\stt$, i.e., 
		the rate of $(\REC,\COL,\snn,\sff)$-CSSPIR is upper bounded by $(\srr-\stt)/\snn$.
	This special case generalizes the result of Holzbaur et al. \cite{HFLH21}, which proved the same upper bound for a restricted class of CSSPIR, i.e., linear CSSPIR with additive randomness.
	Since the protocol by Tajeddine et al. \cite{TGKFH19} achieves this upper bound\footnote[2]{In \cite{TGKFH19}, 
	the notation $\srr$ is for the number of {\em unresponsive} servers, which is $\snn-\srr$ in our paper.
	Moreover, \cite{TGKFH19} defined the denominator of the PIR rate \eqref{defeq:PIR} as ``bit size of all responsive servers''.}, 
		our upper bound proves that the capacity of CSSPIR with $\srr$ responses and $\stt$ colluding servers is $(\srr-\stt)/\snn$.

Our second main result is the equivalence of linear $(\REC,\COL,\snn,\sff)$-CSSPIR and linear $(\ACC,\REJ,\snn)$-CSNSS.
    Linear $(\REC,\COL,\snn,\sff)$-SPIR (linear $(\ACC,\REJ,\snn)$-NSS) is a well-known class of SPIR (NSS) in which the answers (shares) are generated by linear encoders.
	To prove this result, we define {\em multi-target monotone span programs (MMSP)} with general access structures $(\ACC,\REJ)$, which we call {\em $(\ACC,\REJ,\snn)$-MMSP}.
	An MMSP is a pair of a matrix and a map with two linear-algebraic conditions.
    To prove the equivalence result, 
        we separately prove that 
            a $(\ACC,\REJ,\snn)$-CSNSS protocol, a $(\ACC,\REJ,\snn)$-MMSP, and a $(\ACC,\REJ,\snn)$-CSSPIR protocol,  
            are constructed 
            from 
            a $(\ACC,\REJ,\snn)$-CSSPIR protocol, a $(\ACC,\REJ,\snn)$-CSNSS protocol, and a $(\ACC,\REJ,\snn)$-MMSP, respectively,
        which implies the equivalence of the three protocols.
    Our equivalence result of $(\ACC,\REJ,\snn)$-CSNSS and $(\ACC,\REJ,\snn)$-MMSP
        generalizes the equivalence of completely secure SS and monotone span program \cite{Brickell89, KW93,Beimel_thesis}.
Since there exists a linear CSNSS protocol for any access structure $(\fA,\fB)$ \cite{FHKP17},
    the same existence holds for CSSPIR by our second result.

\subsection{Basic Idea in Equivalence} \label{susec:basicidea}
    
The key intuition for the equivalence is 
    the similarity in the structure of the protocols (see Figure~\ref{fig:compintro} and Table~\ref{tab:compintro}).
From this similarity, the server secrecy of SPIR
    is closely related to the secrecy of NSS, 
    while the user secrecy of SPIR is not covered in NSS.
Thus, we can roughly state that SPIR is more complicated protocol than NSS protocol.
With this idea, conversion from the complicated SPIR protocol to the simpler NSS protocol is obtained by utilizing only a part of the SPIR protocol as the NSS protocol.
That is, we only need the operational properties of SPIR, rather than some algebraic structure, for converting it into NSS.
As a result, the conversion from SPIR to NSS can be completed without any linearity assumption.

On the other hand, to construct conversion from NSS to SPIR, we need to add the user secrecy to the NSS protocol.
This addition of user secrecy is hard to be implemented only with the operational definition of NSS protocols.
Thus, instead, we convert linear NSS into linear SPIR.
With the linearity algebraic structure in linear NSS, we add the user secrecy to NSS so that it operates the SPIR's tasks.
As a result, the conversion from NSS to SPIR is limited to linear protocols in this paper.

One interesting observation of the conversion from NSS to SPIR is that 
    the one message protocol, NSS,
    is evolved into the multiple message protocol, SPIR.
This task is completed in our conversion from linear NSS to linear SPIR by the following idea.
When $k$-th file $M_k$ is the desired file of the user, 
    the user forces the servers to choose $M_k$ as a secret of the NSS without leaking $k$ and answer the generated shares of the NSS.
This forcing step by the user is accomplished with a well-established query structure.
Finally, the user reconstructs $M_k$ by collecting an authorized set of shares (answers).

\subsection{Organization}


The remainder of the paper is organized as follows.
Section~\ref{sec:defi} gives the definitions of NSS, SPIR, and related concepts.  
Section~\ref{sec:main_result} presents the main results of the paper.
Section~\ref{sec:SPIRtoNSS} proves the conversions from SPIR to NSS and from linear SPIR to linear NSS.
Sections~\ref{sec:NSS=MMSP} and \ref{sec:NSStoSPIR} prove the conversions from linear CSNSS to MMSP and from MMSP to linear CSSPIR, respectively.
Section~\ref{sec:examples} 
    give two examples of the constructions: 
        a construction with general non-threshold access structure,
        and 
        optimal constructions of linear CSNSS and linear CSSPIR when the access structure with thresholds.
Section~\ref{sec:conclusion} is the conclusion of the paper.

\subsubsection*{Notations}
	We denote random variables by uppercase letters (e.g., $A,B$), 
		and values of the random variables by lowercase letters (e.g., $a,b$),
		sets by calligraphy letters (e.g., $\cA,\cB$),
		parameters in protocols by sans serif lowercase letters (e.g., $\saa, \sbb$),
		and matrices by sans serif upper letters (e.g., $\sA,\sB$).
	We also denote $[n,m] = \{n,n+1,\ldots, m\}$ and $[n] = [1:n]$.
	For any set or sequence $A = \{A_1,\ldots, A_n\}$ and any $\cX \subset [n]$, 
		we denote $A_{\cX} \coloneqq \{ A_i \mid i \in \cX \}$.
	For any $ n\times m$ matrix $\sA$ and any $\cX \subset [n]$, 
		we denote $\sA_{\cX}$ is the restricted matrix by the rows indexed by $\cX$.
	The finite field of order $q$ is denoted by $\FF_q$.
	For a set $\cA$, $\id_{\cA}$ denotes the identity map on $\cA$.
	For a random variable $X$, $\pr_X [ f(X) ]$ is the probability that $X$ satisfies the condition $f(X)$.


%
%



\begin{table}[t]
\begin{center}
\caption{Definition of symbols} \label{tab:symbols}
\begin{tabular}{|c|c|c|c|c|}
\hline
Symbol &	SPIR	&	Secret Sharing \\
\hline
\hline
$\snn$	&	Number of servers	&	Number of shares \\
\hline
$\sff$	&	Number of files & 	- 	\\
\hline
$\smm$	&	Size of one file 		& 	Size of secret	\\
\hline
$\srr$	&	Number of responsive servers	&	Reconstruction threshold	\\
\hline
$\stt$	&	Number of colluding servers		&	Secrecy threshold	\\
\hline
\end{tabular}
\end{center}
\end{table}

\section{Formal definition of NSS and SPIR} \label{sec:defi}

In this section, we formally define SPIR, NSS, and MMSP.
For these definitions, we first define access structures.
\begin{defi}[Access structure]
Let $\snn$ be a positive integer.
We call $\fA \subset 2^{[\snn]}$ an {\em monotone increasing collection} 
	if $\cA \in \fA$ implies $\cC\in\fA$ for any $\cA \subset \cC \subset[\snn]$.
In contrast, we call $\fB \subset 2^{[\snn]}$ a {\em monotone decreasing collection} 
	if $\cB \in \fB$ implies $\cC\in\fB$ for any $\cC \subset \cB$.
An {\em access structure} on $[\snn]$ is defined as a pair of monotone increasing and decreasing collections $\REC$ and $\COL \subset 2^{[\snn]}$ such that $\REC \cap \COL = \emptyset$.
\end{defi}

\begin{exam} \label{exam:as}
When $\snn = 3$, $\fA = \{  \{2,3\}, \{1,2,3\} \}$, and $\fB = \{\emptyset,  \{1\},  \{2\}, \{3\}, \{1,2\} \}$,
    the pair $(\fA,\fB)$ forms an access structure.
The monotone increasing collection $\fA$ consists of all subsets containing $\{2,3\}$,
    and
    the monotone decreasing collection $\fB$ consists of all subsets contained in $\{1,2\}$ or $\{3\}$.
In this example, the subsets of size $2$ are contained in different collections:
    $\{2,3\}\in\fA$, $\{1,2\}\in\fB$, and $\{1,3\} \not \in \fA\cup \fB$.
\end{exam}

\subsection{Formal definition of symmetric private information retrieval (SPIR)}
    \label{sec:spirforam}


We formally define a SPIR protocol with one user and $\snn$ servers as follows. 

\begin{defi}[$(\REC,\COL,\snn,\sff)$-SPIR] \label{defi:spir}
Files are given as a uniform random variable $M = (M_1,\ldots,M_{\sff}) \in \cM^{\sff}$
	and $\smm \coloneqq |\cM|$.
Each of $\snn$ servers contains the files $M$.
Let $\RandSPIR \in \SRandSPIR$ be uniform random variable, called the {\em random seed} for servers, and the random seed $\RandSPIR$ is encoded as $h(\RandSPIR) = T = (T_1,\ldots, T_n) \in \cT \subset \cT_1\times \cdots \times \cT_{\snn}$ by the shared randomness encoder $h$.
The randomness $T$ is distributed so that $j$-th server contains $T_j$.
Let $K$ be a uniform random variable in $[\sff]$, which represents the user's input.
When $K=k$, the targeted file is $M_k$.

A protocol $\PSPIR$ is defined by the deterministic mappings $f,g_1,\ldots, g_{\snn}$ in the following steps.
\begin{itemize}[leftmargin=1.5em]
\item \textbf{Query}:
Depending on the user's private uniform randomness $R\in\cR$, 
	the user prepares $\snn$ queries by
\begin{align}
Q = (Q_1,\ldots, Q_{\snn}) = f(K,R) \in \cQ_1\times \cdots \times \cQ_{\snn}
\end{align}
and sends the $j$-th query $Q_j$ to the $j$-th server.

\item \textbf{Answer}:
The $j$-th server returns 
\begin{align}
D_j = g_j(Q_j,M, T_j) \in \cD_j
\end{align}
to the user.

	
\end{itemize}

For an access structure $(\REC, \COL)$ on $[\snn]$.
    the protocol $\PSPIR$ is called an $(\REC,\COL,\snn,\sff)$-SPIR protocol with security $(\alpha,\beta,\gamma)$ if 
    the following conditions are satisfied.
\begin{itemize}[leftmargin=1.5em]
\item \textbf{Correctness}:
    We define the user's maximum likelihood decoder 
    \begin{align*}
    \hat{m}^{\mathrm{ML}}_{d_{\cA},r,k} &\coloneqq \argmax_{m_k\in\cM} \Pr[D_{\cA}= d_{\cA}|(M_K, R,  K) = (m_k, r, k) ].
    \end{align*}
Then, 
\begin{align}
\alpha \ge 
\alpha (\PSPIR) \coloneqq \max_{r\in\cR,k\in[\sff],\cA\in\fA} \pr_{M_k,D_{\cA}} [ M_k \neq \hat{m}^{\mathrm{ML}}_{D_{\cA},r,k}].
\label{correct:spir}
\end{align}

\item \textbf{Server secrecy}:
\begin{align}
\beta \ge 
\beta (\PSPIR) \coloneqq
\max_{r\in\cR, k\in[\sff] } I(D ; M_{[\sff]\setminus \{k\}} | R=r, K=k) .
\label{cond:serv_secrecydef}
\end{align}

\item \textbf{User secrecy}:
\begin{align}
\gamma \ge
\gamma(\PSPIR)  \coloneqq \max_{\cB\in \COL} I(K;Q_{\cB}).
\end{align}

\end{itemize}

The PIR rate of the protocol $\PSPIR$ is defined as 
	\begin{align}
	\RPIR(\PSPIR) \coloneqq  
		\frac{\log \smm}{ \sum_{j=1}^{\snn}\log |\cD_j|}.
	\end{align}
The shared randomness rate is defined as 
\begin{align}
R_{\RandSPIR} = \frac{\log |\SRandSPIR|}{\log \smm}.
\end{align}
\end{defi}

%
%

\begin{remark}
From the definition of the protocol, 
    it seems natural to define the server secrecy condition as $\beta' (\PSPIR) \coloneqq\max_{\cA\in\fA, r\in\cR, k\in[\sff] } I(D_\cA ; M_{[\sff]\setminus \{k\}} | R=r, K=k)$, which is the maximization of the $\beta' (\PSPIR)$ defined in \eqref{cond:serv_secrecydef}. 
Indeed, we have $\beta' (\PSPIR) = \beta (\PSPIR)$ since the collection $\fA$ contains the set $[\snn]$ by the monotone increasing property.
\end{remark}


In the above security conditions,
$\alpha (\PSPIR)$ is the worst-case error probability with the maximum likelihood decoder,
$\beta (\PSPIR)$ is the worst-case leakage of the non-targeted files to the user,
and $\gamma (\PSPIR)$ is the worst-case leakage of the index $K$ to the colluding servers.
If $(\alpha,\beta,\gamma) = (0,0,0)$, 
	the $(\REC,\COL,\snn,\sff)$-SPIR protocol has complete security.
\begin{defi}[CSSPIR]
A  $(\REC, \COL,\snn,\sff)$-SPIR protocol with security $(\alpha,\beta,\gamma) = (0,0,0)$ is called 
    a completely secure $(\REC, \COL,\snn,\sff)$-SPIR ($(\REC, \COL,\snn,\sff)$-CSSPIR) protocol.
\end{defi}
In a $(\REC,\COL,\snn,\sff)$-CSSPIR protocol,
		the user can recover the targeted file $M_K$ without error from the answers indexed by any $\cA\in\REC$,
	but the servers indexed by $\cB\in\COL$ obtain no information of $K$.
The achievable rate and capacity of $(\REC, \COL,\snn,\sff)$-CSSPIR are defined as follows.

\begin{defi}[Achievability of $(\REC, \COL,\snn,\sff)$-CSSPIR]
A PIR rate $R$ is achievable if
	there is a sequence of $(\REC,\COL,\snn,\sff)$-CSSPIR protocols
		$\{ \PSPIR \}_{\smm\in\mathbb{N}}$
	such that $\RPIR(\PSPIR) \to R$ as $\smm\to\infty$.
\end{defi}

\begin{defi}[Capacity of $(\REC,\COL,\snn,\sff)$-CSSPIR]
Capacity of $(\REC,\COL,\snn,\sff)$-CSSPIR, 
	$C^{(\REC,\COL,\snn,\sff)}_{\mathrm{SPIR}}$,
	is the supremum of 
	achievable rates of $(\REC,\COL,\snn,\sff)$-CSSPIR.
%
\end{defi}

As special cases, SPIR with thresholds and linear SPIR are defined as follows.

\begin{defi}[$(\srr,\stt,\snn,\sff)$-SPIR]
When $\REC = \{ \cA \subset [\snn] \mid |\cA| \geq \srr \}$
	and $\COL = \{ \cB \subset [\snn] \mid |\cB| \leq \stt \}$,
	$(\REC,\COL,\snn,\sff)$-SPIR protocols are called $(\srr,\stt,\snn,\sff)$-SPIR protocols.
\end{defi}

\begin{defi}[{Linear SPIR}] \label{defi:lin_spir}
A protocol $\PSPIR$ is called a {\em linear SPIR} protocol if 
    the following conditions are satisfied with a map $\tau: [\sz] \to [\snn]$, called the {\em position map}.
\begin{description}

    \item[Vector representation of files]
    The files $M_i$ are written as a vector in $\Fq^{\sx}$. 
    The entire file is written by the concatenated vector $M = (M_1, \ldots, M_\sff) \in \Fq^{\sff \sx }$.

    \item[Linearity of shared randomness]
    The random seed $\RandSPIR$ is written by a uniform random vector in $\Fq^{\sy}$.
    The randomness encoder is written as a matrix $\sH\in\Fq^{\sz\times \sy}$ and 
    the shared randomness is written as $T = \sH \RandSPIR \in \Fq^{\sz}$.
    The randomness of the $j$-th server is written as $T_j = \sH_{\tau^{-1}(j)} T'  \in \Fq^{|\tau^{-1}(j)|}$.
    \item[Linearity of servers]
    The answer of the $j$-th server $D_j$ is written as 
    the sum of 
        the shared randomness $T_j$
        and
        the encoded output of the files $M$ by the linear function $ Q_{\tau^{-1}(j)} $, which depends on the query, i.e., 
        \begin{align}
        D_j = Q_{\tau^{-1}(j)} M  + T_j \in \Fq^{|\tau^{-1}(j)|}.
        \label{eq:linearencc}
        \end{align}
Therefore, we can consider that the query to the $j$-th server is given as
the linear function, a matrix, $ Q_{\tau^{-1}(j)} \in \Fq^{\sz\times \sff\sx }$.
\end{description}

A linear SPIR is described by the triplet $(\sH, Q,\tau)$ of randomness encoder $\sH\in\Fq^{\sz\times \sy}$, random matrix of query $Q\in \Fq^{\sz\times \sff\sx }$, and position map $\tau: [\sz] \to [\snn]$.
Due to the above conditions, the PIR rate and the shared randomness rate of a linear SPIR protocol are $\sx/\sz$ and $\sy/\sx$, respectively.
\end{defi}

\begin{remark}
Our definition of linear SPIR generalizes the definition of Holzbaur et al. \cite{HFLH21}. 
The linear SPIR defined in \cite{HFLH21} corresponds to our definition with 
    $|\tau^{-1}(j)| =c $ for a fixed $c$.
\end{remark}

In the definition of linear PIR,
    linearity is required for the generation of the shared randomness and the servers' answers.
We define a new subclass of linear PIR with a specific linear condition on the user's encoding as follows.

%

\begin{defi}[Projected linear SPIR] \label{defi:prj_lin_spir}
A linear SPIR protocol $\PSPIR$ is called a {\em projected linear SPIR} protocol if 
the query $
Q=(Q_{\tau^{-1}(1)}, \ldots, Q_{\tau^{-1}(\snn)} )$ satisfies the following condition;
\begin{description}
\item[Linearity of user]
The uniform randomness $R$ of the user 
is written as a matrix in $\FF_q^{\sy \times \sff\sx}$.
The query is written as 
	\begin{align}
	Q 
	= \sJ \sE_k + \sH R 
	\in\FF_q^{\sz\times \sff\sx},
	\end{align}
	where 
	$\sJ$ is a matrix in $\FF_q^{\sz\times \sx}$,
	$\sE_k\in\FF_q^{\sx\times \sff\sx} = \FF_q^{\sx\times \sx}\times \cdots \times \FF_q^{\sx\times \sx}$ is the matrix whose $k$-th block is the $\sx\times \sx$ identity matrix and the other blocks are zero,
	and
	$\sH\in\FF_q^{\sz\times \sy}$ is the same matrix as the shared randomness encoder.
The query to the $j$-th server is written as $Q_{\tau^{-1}(j)} \in \Fq^{|\tau^{-1}(j)|\times \sff\sx }$.

%
%
%
%
%
\end{description}

A projected linear SPIR is described by the triplet $(\sH, \sJ, \tau)$ of randomness encoder $\sH\in\Fq^{\sz\times \sy}$, user's query encoder $\sJ\in \Fq^{\sz\times \sx}$, and position map $\tau: [\sz] \to [\snn]$.

\end{defi}

Whereas the size of the query $Q\in\Fq^{\sz\times \sff\sx}$ in linear SPIR is increasing with the number of files $\sff$,
    that of projected linear SPIR $\sJ\in \Fq^{\sz\times \sx}$ is smaller and independent of the size of the files.

We define a projection $\hat{p}$ from linear SPIR protocols into projected linear SPIR protocols as follows.

\begin{defi}[Projection] \label{def:projection1}
For a linear SPIR protocol $\PSPIR$, let $\sQ = (\sQ',\sQ'')\in \Fq^{\sz\times \sx} \times \Fq^{\sz\times (\sff-1)\sx}$ be the matrix of query when $K=1$ and $R=0$.
Further, let $\sH\in \Fq^{\sz\times \sy} $ be the randomness encoder and $\tau$ be the position map of $\PSPIR$.
Then, 
    we define a map $\hat{p}$ that maps $\PSPIR$
        to the projected linear SPIR protocol described by the triplet $(\sH, \sQ', \tau)$ of the randomness encoder, user's query encoder, and position map.
The map $\hat{p}$ is a projection to projected linear SPIR because any projected linear SPIR protocol is mapped to itself.
\end{defi}


\subsection{Formal definition of non-perfect secret sharing (NSS)}



We formally define an NSS protocol with one dealer and $\snn$ players as follows. 
\begin{defi}[$(\ACC,\REJ,\snn)$-NSS] \label{def:nss}

A secret is given as a uniform random variable $L \in \cM$ and $\smm \coloneqq |\cM|$.
A protocol $\PNSS$ is defined by 
    the following deterministic map (the encoding map) $f$ to generate n shares in the share generation step;
\begin{itemize}[leftmargin=1.5em]
\item \textbf{Share generation}:
Depending on the dealer's private uniform randomness $\RandNSS\in\SRandNSS$,
	the dealer prepares $\snn$ shares by 
\begin{align}
S = (S_1,\ldots, S_{\snn}) = f(L,\RandNSS) \in \cS_1\times \cdots \times \cS_{\snn},
\end{align}
and sends the $j$-th share $S_j$ to the $j$-th player.

\end{itemize}

For an access structure $(\ACC, \REJ)$  on $[\snn]$,
	the protocol $\PNSS$ is called an $(\ACC,\REJ,\snn)$-NSS protocol with security $(\alpha ,\beta )$ if 
	the following conditions are satisfied.
\begin{itemize}[leftmargin=1.5em]
\item \textbf{Correctness}:
We define the maximum likelihood decoder 
    \begin{align*}
    \hat{\ell}^{\mathrm{ML}}_{s_{\cA}} &\coloneqq \argmax_{\ell\in\cM} \Pr[S_{\cA} = s_{\cA}| L = \ell].
    \end{align*}
Then, 
\begin{align}
\alpha \ge 
\alpha (\PNSS)\coloneqq \max_{\cA\in\fA} \pr_{L,S_{\cA}}[L \neq \hat{\ell}^{\mathrm{ML}}_{S_{\cA}}].
\label{correct:spir}
\end{align}


\item \textbf{Secrecy}:
\begin{align}
	\beta \ge \beta(\PNSS) \coloneqq
	\max_{\cB\in \REJ} I(L ;S_{\cB}) .
\end{align}

\end{itemize}
The SS rate of the protocol $\PNSS$ is defined as 
	\begin{align}
	\RSS(\PNSS) \coloneqq
		\frac{\log \smm}{ \sum_{j=1}^{\snn} \log |\cS_j|}.
	\end{align}
The randomness rate is defined as 
\begin{align}
R_{\RandNSS} = \frac{\log |\SRandNSS|}{\log \smm}.
\end{align}
\end{defi}

In the above security conditions,
$\alpha (\PSPIR)$ is the worst-case error probability with the maximum likelihood decoder and
$\beta (\PSPIR)$ is the worst-case leakage of the secret to the forbidden set of shares.
If $(\alpha,\beta) = (0,0)$, 
	the $(\ACC,\REJ,\snn)$-CSNSS protocol has complete security.
\begin{defi}[CSNSS]
A $(\ACC,\REJ,\snn)$-NSS protocol with security $(\alpha,\beta) = (0,0)$ is called 
    a completely secure $(\ACC,\REJ,\snn)$-NSS ($(\ACC,\REJ,\snn)$-CSNSS) protocol.
\end{defi}
In a $(\ACC,\REJ,\snn)$-CSNSS protocol, 
		the players indexed by any $\cA\in\ACC$ can recover $L$ without error, 
	but the players indexed by any $\cB \in \REJ$ obtain no information of $L$.
The achievable rate and capacity of $(\ACC,\REJ,\snn)$-CSNSS are defined as follows.

\begin{defi}[Achievability of $(\ACC,\REJ,\snn)$-CSNSS]
A SS rate $R$ is achievable if
	there is a sequence of $(\ACC,\REJ,\snn)$-CSNSS protocols 
		$\{ \PNSS \}_{\smm\in\mathbb{N}}$
	such that $\RSS(\PNSS) \to R$ as  $\smm\to\infty$.
\end{defi}

\begin{defi}[Capacity of $(\ACC,\REJ,\snn)$-CSNSS]
Capacity of $(\ACC,\REJ,\snn)$-CSNSS, 
	$C^{(\REC,\COL,\snn)}_{\mathrm{NSS}}$,
	is the supremum of 
	achievable rates of $(\ACC,\REJ,\snn)$-CSNSS.
%
\end{defi}

\begin{remark}
When $\fA \cup \fB = 2^{[\snn]}$, the protocol in Definition~\ref{def:nss} is called a {\em (perfect) secret sharing} protocol.
Since all results in this paper are obtained for any general access structure, 
    all results also apply to perfect SS.
\end{remark}

As special cases, NSS with thresholds and linear NSS are defined as follows.

\begin{defi}[$(\srr,\stt,\snn)$-NSS]
When $\ACC = \{ \cA \subset [\snn] \mid |\cA| \geq \srr \}$
	and $\REJ = \{ \cB \subset [\snn] \mid |\cB| \leq \stt \}$,
	$(\ACC,\REJ,\snn)$-NSS protocols are called $(\srr,\stt,\snn)$-NSS protocols.
\end{defi}

\begin{defi}[Linear NSS]
A protocol $\PNSS$ is called a {\em linear NSS} protocol if 
    the encoding map $f$ satisfies the following conditions with a position map $\tau: [\sz] \to [\snn]$.
    \begin{description}
    \item[Vector representation of secret]
    The secret $L$ is written as a vector in $\FF_q^{\sx}$.
	\item[Vector representation of randomness]
	The dealer's private randomness $\RandNSS$ is written as a uniform random vector in $\FF_q^{\sy}$.
	\item[Linearity of share generation]
	The encoder $f$ is written as a linear map from $\FF_q^{\sx+\sy}$ to $\FF_q^{\sz}$ that maps
	$(Z_1,\ldots, Z_\sz) = f (L, \RandNSS)$.
	The $j$-th secret is written as $S_j =  ( Z_{i} \mid i \in \tau^{-1}(j) ) \in \Fq^{|\tau^{-1}(j)|}$.
	\end{description}
	Thus, a linear NSS $\PNSS$ is described by the pair $(f, \tau)$ of a linear map $f:\Fq^{\sx+\sy}\to\Fq^{\sz}$ and a position map $\tau: [\sz] \to [\snn]$.  
	Due to the above conditions, the SS rate and randomness rate of the linear NSS are $\sx/\sz$ and $\sy/\sx$, respectively.
\end{defi}

\subsection{Formal definition of multi-target monotone span program (MMSP)}

We define  a multi-target monotone program (MMSP) 
	with an access structure as follows.
\begin{defi}[{Multi-target monotone span program (MMSP)}] \label{defi:MMSP}
Given positive integers $\snn,\sx,\sy,\sz$, 
	a pair $\cP = (\sG ,\tau)$
	of 
	a matrix $\sG= (\sG',\sG'') \in \FF_q^{\sz\times (\sx+\sy)}  = \FF_q^{\sz\times \sx} \times \FF_q^{\sz\times \sy}$
	and a position map $\tau: [\sz] \to [\snn]$
	is called a multi-target monotone span program (MMSP).


Let $\mathbf{e}_i\in\Fq^{\sx+\sy}$ be the row vector with $1$ in the $i$-th coordinate and $0$ in the others.
Let $\cE \coloneqq \{\mathbf{e}_1,\ldots, \mathbf{e}_\sx\}$ and
    we say the following;
\begin{itemize}
\item $\cP$ accepts $\ACC$ if 
		$\cE$ is included in the row space of $\sG_{\tau^{-1}(\cA)}$
			for any $\cA\in\ACC$, and
		
\item $\cP$ rejects $\REJ$ if 
        $\spann \cE \cap \rowspan \sG_{\tau^{-1}(\cB)} = \{0 \}$ for any $\cB\in\REJ$.
\end{itemize}
An MMSP $\cP$ is called $(\ACC,\REJ,\snn)$-MMSP if $\cP$ accepts $\ACC$ and rejects $\REJ$.
The MMSP rate $\RMMSP$ is defined as the ratio $\sx/\sz$.
\end{defi}

\begin{remark}
Our definition of MMSP generalizes the definition in \cite{Beimel11,BI93,Dijk95}.
The MMSP defined in \cite{Beimel11,BI93,Dijk95} corresponds to our definition with 
    $\ACC \cup \REJ = 2^{[\snn]}$ and $\ACC \cap \REJ = \emptyset$, i.e., 
	every subset of $[\snn]$ is either authorized or forbidden.
Our definition of MMSP also generalizes the monotone span programs \cite{KW93},
    which corresponds to the case $\sx = 1$ and $\ACC\cup\REJ = 2^{[\snn]}$ for our MMSP definition.

The papers \cite{Brickell89, KW93,Beimel_thesis} proved 
	the equivalence of linear perfect SS protocols with complete security and monotone span programs.
This paper generalizes this relation to the equivalence of linear CSNSS protocols and MMSPs.
\end{remark}

As special cases, we define $(\ACC,\REJ,\snn)$-MMSPs with thresholds as follows.

\begin{defi}[$(\srr,\stt,\snn)$-MMSP]
When $\ACC = \{ \cA \subset [\snn] \mid |\cA| \geq \srr \}$
	and $\REJ = \{ \cB \subset [\snn] \mid |\cB| \leq \stt \}$,
	$(\ACC,\REJ,\snn)$-MMSPs are called $(\srr,\stt,\snn)$-MMSPs.
\end{defi}

$(\srr,\stt,\snn)$-MMSPs are related with maximum distance separable (MDS) codes.
The column space of a matrix $\sA\in\FF_q^{\snn\times\skk}$ is called an $(\snn,\skk)$-MDS code
	if any $\skk$ rows of $\sA$ are linearly independent.
We give the relation between $(\srr,\stt,\snn)$-MMSPs and MDS codes by the following theorem. 

\begin{theo} \label{theo:MDS-MMSP}
Let 
$\sG = (\sG', \sG'') \in \FF_q^{\snn\times (\srr-\stt)} \times \FF_q^{\snn\times \stt}$.
The following conditions are equivalent.

\begin{enumerate}[label=(\alph*)]
\item With $\tau \coloneqq \id_{[\snn]}$ and $(\sx,\sy,\sz) \coloneqq (\srr-\stt, \stt, \snn)$, $\cP = (\sG,\tau)$ is an $(\srr,\stt,\snn)$-MMSP. 
\item $\sG$ ($\sG''$) is the generator matrix of an $(\snn,\srr)$-MDS code ($(\snn,\stt)$-MDS code).
\end{enumerate}
\end{theo}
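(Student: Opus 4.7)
The plan is to prove both directions, using the convenient reformulations that acceptance for $|\cA| \geq \srr$ is equivalent to $\rowspan \sG_\cA \supseteq \spann\cE$, while rejection for $|\cB| \leq \stt$ is equivalent to the projection $\pi \colon \FF_q^\srr \to \FF_q^\stt$ onto the last $\stt$ coordinates (with $\ker\pi = \spann\cE$) being injective on $\rowspan \sG_\cB$, i.e.\ $\rank \sG_\cB = \rank \sG''_\cB$. Direction (b) $\Rightarrow$ (a) is straightforward: MDS-ness of $\sG$ makes $\sG_\cA$ invertible for $|\cA| \geq \srr$, so $\rowspan \sG_\cA = \FF_q^\srr \supseteq \cE$; MDS-ness of $\sG''$ implies $\mu \sG''_\cB = 0 \Rightarrow \mu = 0$ for $|\cB| \leq \stt$, ruling out nonzero vectors in $\rowspan \sG_\cB \cap \spann\cE$.

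The substance is in (a) $\Rightarrow$ (b). I would first show that $\sG_\cA$ is invertible for every $|\cA| = \srr$, and then deduce invertibility of $\sG''_\cB$ for every $|\cB| = \stt$ by extending $\cB$ to an $|\cA|=\srr$ superset and using rejection to promote linear independence of rows of $\sG_\cB$ to linear independence of rows of $\sG''_\cB$. For the main claim, assume for contradiction that $k = \rank \sG_\cA < \srr$. Acceptance gives $k \geq \srr - \stt$ and $\rank \sG''_\cA = k - (\srr - \stt)$. Choose $\cA_0 \subseteq \cA$ with $|\cA_0| = k - \srr + \stt$ such that the rows of $\sG''_{\cA_0}$ form a basis of $\rowspan \sG''_\cA$; because $|\cA_0| \leq \stt$, rejection applied to $\cA_0$ gives $\rank \sG_{\cA_0} = \rank \sG''_{\cA_0} = |\cA_0|$, so the rows of $\sG_{\cA_0}$ are themselves linearly independent.

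For each $i \in \cA \setminus \cA_0$, the row $\sG''_i$ equals a combination $\sum_{j \in \cA_0} c_{ij} \sG''_j$, so $v_i := \sG_i - \sum_{j \in \cA_0} c_{ij} \sG_j$ projects to zero under $\pi$ and hence lies in $\spann\cE$. Since $|\cA_0 \cup \{i\}| = k - \srr + \stt + 1 \leq \stt$ (this is where $k < \srr$ enters crucially), rejection applied to $\cA_0 \cup \{i\}$ forces $v_i = 0$, so $\sG_i \in \rowspan \sG_{\cA_0}$. Consequently $\rowspan \sG_\cA \subseteq \rowspan \sG_{\cA_0}$, giving $k \leq |\cA_0| = k - \srr + \stt$, which contradicts $\srr > \stt$ (a condition that holds for any valid access structure since $\fA \cap \fB = \emptyset$).

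The main obstacle is that neither acceptance nor rejection separately pins down $\rank \sG_\cA$: acceptance only lower-bounds it by $\srr - \stt$, and rejection is a statement about submatrices indexed by small $\cB$. The decisive trick is to choose $\cA_0$ of exactly the right size so that (i) rejection applies to $\cA_0$, yielding linear independence of its rows, and (ii) rejection also applies to each one-point extension $\cA_0 \cup \{i\}$, forcing the correction $v_i$ to vanish; the slack $|\cA_0|+1 \leq \stt$ is precisely $\srr - k - 1 \geq 0$, coming from the hypothesis $k < \srr$.
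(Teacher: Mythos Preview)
Your proof is correct. For $(b)\Rightarrow(a)$ you and the paper argue essentially identically.

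For $(a)\Rightarrow(b)$ the two approaches differ in order and in the main technical step. The paper first asserts, from the rejection condition, that any $\stt$ rows of $\sG''$ are linearly independent, and then uses $\rank\sG_{[\stt]}=\stt$ together with acceptance and the rejection-given direct sum $\spann\cE\oplus\rowspan\sG_{[\stt]}\subseteq\rowspan\sG_{[\srr]}$ to conclude $\dim\rowspan\sG_{[\srr]}\geq\srr$ by a one-line dimension count. You reverse the order: you establish invertibility of $\sG_\cA$ first, via a contradiction argument that extracts a subset $\cA_0$ of exactly the right size so that rejection applies both to $\cA_0$ and to each one-point extension $\cA_0\cup\{i\}$, and only afterward deduce that $\sG''$ is MDS from $\sG$ MDS plus rejection.

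Your route is longer but more self-contained: the paper's opening claim that rejection alone forces $\sG''$ to be MDS is not argued and is in fact false without acceptance (take all rows of $\sG$ to have zero $\sG'$-block; rejection holds trivially while $\sG''$ can be arbitrary). Once one grants that any $\stt$-subset $\cB\subset\cA$ has $\rank\sG_\cB=\stt$, the paper's dimension count is a cleaner path to invertibility of $\sG_\cA$ than your contradiction argument; but establishing that premise is exactly the content your argument supplies.
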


Theorem~\ref{theo:MDS-MMSP} will be proved in Appendix~\ref{MMSP-MDS}.

The matrices $\sG$ and $\sG''$ with condition (b) of Theorem~\ref{theo:MDS-MMSP}
 have been applied for 
    private information retrieval \cite{WS17,WS17-2, BU18, FHGHK17, TGKFH19}, 
	secret sharing \cite{Shamir79, Yamamoto86, BK95, CCGHV07, DP10, KMU12, HB17}, 
	wiretap channel II \cite{OW84,SM09},
	secure network coding \cite{CY11, CH17, ERSS12,NYZ11, SK11, ZZ09, KMU15},
    distributed storage system \cite{DGWR10}
	and cryptography \cite{Vaudenay, ZSE02}.
Especially, Ozarow and Wyner's optimal wiretap channel code corresponds to the case $\srr = \snn$ and Shamir's secret sharing protocol corresponds to the case where $\srr = \stt + 1$.
Thus, from Theorem~\ref{theo:MDS-MMSP}, $(\srr,\stt,\snn)$-MMSP with condition (a) characterizes the structure of those protocols
    and $(\ACC,\REJ,\snn)$-MMSP is a generalization of those protocols with general access structures.




\section{Main Results} \label{sec:main_result}

In this section, we present our main theorems in three subsections.
The subsections present, respectively, the conversions from SPIR to NSS, from linear CSNSS to MMSP, and from MMSP to projected linear CSSPIR.

\subsection{Conversion from SPIR to NSS} \label{sec:main1}


The following protocol converts a SPIR protocol to an NSS protocol. 
\begin{prot} \label{prot:SPIRtoSS}
Given a SPIR protocol $\PSPIR$,
    an NSS protocol $\hPNSS[\PSPIR]$ is constructed by generating $\snn$ shares as follows.
Recall that from the definition of the SPIR protocol $\PSPIR$, the symbols $\cM$, $\SRandSPIR$, and $\cR$ denote the space of files, random seeds for shared randomness, and private randomnesses of the user, respectively.

We define $r^{\ast}\in\cR$ and $\mast\in \cM^{\sff-1}$ as
    \begin{align}
    &r^{\ast} \in \argmin_{r\in\cR}  \px*{ \max_{\cB \in \REJ} I( M_{1} ; D_{\cB} | R=r, K=1) },  \label{lemmeq:correct} \\
    &\mast \in \argmin_{m_{[2:\sff]}\in \cM^{\sff-1}}  I(M_1 ; D_{\cB} |M_{[2:\sff]}=m_{[2:\sff]}, R=r^{\ast}, K=1).
    \label{eq:m2nast}
    \end{align}
The values $r^\ast$ and $\mast$ are publicly known.

The dealer chooses the secret $L\in\cM$ and the uniform randomness $\RandNSS \in \SRandSPIR$.
%
%
%
%
%
The dealer simulates $\PSPIR$
    with $K\coloneqq1$, 
    $M_1 \coloneqq L$, 
    and $\RandSPIR\coloneqq \RandNSS $
    while fixing the user's private randomness as $R \coloneqq r^{\ast}$
    and the 2nd, \ldots, $\sff$-th file as $M_{[2:\sff]} \coloneqq \mast$.
From this simulation of $\PSPIR$, the dealer generates the answers $D_1,\ldots, D_\snn$ of the SPIR protocol $\PSPIR$
    and sets the $j$-th share for NSS as $S_j \coloneqq D_j$. 
\end{prot}

Our first result is as follows. 
\begin{theo} \label{theo:SPIRtoNSS}
Let $\PSPIR$ be an $(\REC,\COL,\snn,\sff)$-SPIR protocol with 
rate $\RPIR$,
shared randomness rate $R_{\RandSPIR}$,
security $(\alpha,\beta, \gamma)$.
Then, the NSS protocol $\hPNSS[\PSPIR]$ defined in Protocol~\ref{prot:SPIRtoSS}
	is an $(\ACC,\REJ,\snn)$-NSS protocol $\PNSS$ with 
	SS rate $\RSS = \RPIR$, 
	randomness rate $R_{\RandNSS} = R_{\RandSPIR}$,
	and security $(\alpha, \xi(\alpha,\beta,\gamma))$, where 
	\begin{align}
	\xi(\alpha,\beta,\gamma) 
	&= 2 \beta +  (1-\alpha+4\sqrt{2\gamma\sff}) \log \smm \\
	&\quad + 2h_2(\sqrt{2\gamma\sff})
+ h_2(1-\alpha) + \log \alpha
    \end{align}
	and $h_2$ is the binary entropy function $h_2(p) =  -p \log p - (1-p)\log (1-p)$.
	Here, $\xi(\alpha,\beta,\gamma)$ goes to $0$ as $(\alpha,\beta,\gamma) \to (0,0,0)$.
\end{theo}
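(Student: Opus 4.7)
The plan is to verify three things in sequence: the two rate identities, the NSS correctness bounded by $\alpha$, and the NSS secrecy bounded by $\xi(\alpha,\beta,\gamma)$. The first two are short structural observations; essentially all the work is in the third.

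Rates and correctness are immediate from the construction. Since each NSS share $S_j$ is literally the SPIR answer $D_j$ and $L$ lives in the same alphabet $\cM$ as each file, the numerators and denominators of $\RSS$ and $\RPIR$ coincide, and $\SRandNSS=\SRandSPIR$ gives $R_{\RandNSS}=R_{\RandSPIR}$. For correctness, the NSS ML decoder is optimal, so it suffices to bound the error of the decoder which applies the SPIR ML decoder $\hat m^{\mathrm{ML}}_{\cdot,r^*,1}$ to the shares. Under the NSS joint distribution ($K=1$, $R=r^*$, $M_{[2:\sff]}=\mast$, $M_1=L$ uniform) this error coincides with the SPIR error probability at $(r^*,1,\cA)$, which the SPIR correctness assumption bounds by $\alpha$.

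The secrecy step is the main work. Fix $\cB\in\REJ$; I upper-bound $I(L;S_\cB)=I(M_1;D_\cB \mid K=1, R=r^*, M_{[2:\sff]}=\mast)$. By the choice of $\mast$ as an argmin in \eqref{eq:m2nast}, a single minimizer value is at most the average over $M_{[2:\sff]}$, giving $I(M_1;D_\cB \mid K=1,R=r^*,M_{[2:\sff]}=\mast) \le I(M_1;D_\cB \mid K=1,R=r^*,M_{[2:\sff]})$. The chain rule rewrites this as $I(M;D_\cB \mid K=1,R=r^*) - I(M_{[2:\sff]};D_\cB \mid K=1,R=r^*)$, and the server secrecy bound $\beta$ via data processing controls the subtractive term. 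The remaining and hardest piece is to upper-bound $I(M;D_\cB \mid K=1, R=r^*)$ using user secrecy. Applying Pinsker to $I(K;Q_\cB)\le\gamma$ shows that $\Pr[Q_\cB\mid K=1]$ and $\Pr[Q_\cB\mid K=k]$ are within total variation $O(\sqrt{\gamma})$ for each $k\neq 1$; since $D_\cB$ is a deterministic function of $(Q_\cB,M,T_\cB)$ with $T_\cB$ independent of $K$, this TV closeness lifts to the joint law of $(M,D_\cB)$. Under $K=k\neq 1$ the file $M_1$ is non-targeted, so server secrecy yields $I(M_1;D_\cB\mid K=k)\le\beta$; transferring this bound back to $K=1$ via Fannes-type continuity of (conditional) mutual information produces the $\log\smm$ and $h_2(\cdot)$ terms of $\xi$, and a union bound over the $\sff$ possible alternative values of $K$ produces the $\sqrt{2\gamma\sff}$ factor.

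The main obstacle is this last subtask: performing the Pinsker--Fannes continuity transfer between the $K=1$ and $K=k$ regimes, and tracking the accumulated slacks so that they aggregate to exactly the stated formula for $\xi(\alpha,\beta,\gamma)$. All other ingredients---the rate identities, correctness, and the min-vs-average reduction in secrecy---are short algebraic manipulations of mutual information.
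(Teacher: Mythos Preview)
Your treatment of rates, correctness, and the argmin reduction for $\mast$ matches the paper. The gap is in the secrecy step, specifically in how you propose to bound $I(M;D_{\cB}\mid K=1,R=r^{\ast})$.

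First a minor point: saying that server secrecy ``controls the subtractive term'' $I(M_{[2:\sff]};D_{\cB}\mid K=1,R=r^{\ast})$ is not useful here, since that term appears with a minus sign; an upper bound on it gives you a \emph{lower} bound on the target. You must simply drop it and bound $I(M;D_{\cB}\mid K=1,R=r^{\ast})$ from above.

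The substantive gap is that your Pinsker--Fannes transfer, as you describe it, only produces a bound on $I(M_{1};D_{\cB}\mid K=1)$: you switch to $K=k\neq 1$, invoke server secrecy for the non-targeted file $M_{1}$ to get $I(M_{1};D_{\cB}\mid K=k)\le\beta$, and transfer back. That accounts for one $\beta$ and the $\gamma$-dependent terms $4\sqrt{2\gamma\sff}\log\smm+2h_{2}(\sqrt{2\gamma\sff})$, but it does \emph{not} bound the remaining chain-rule piece $I(M_{[2:\sff]};D_{\cB}\mid M_{1},K=1,R=r^{\ast})=I(M_{[2:\sff]};M_{1}D_{\cB}\mid K=1,R=r^{\ast})$. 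Nothing in your sketch explains where the $\alpha$-dependent terms $(1-\alpha)\log\smm+h_{2}(1-\alpha)+\log\alpha$ of $\xi$ come from, and they cannot come from user secrecy plus server secrecy alone: under $K=k\neq 1$ the file $M_{k}$ is targeted and server secrecy says nothing about its leakage, so a Fannes transfer for the full $I(M;D_{\cB})$ is not available.

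The paper closes this gap by a separate argument that uses \emph{correctness}, not user secrecy. Since $M_{1}$ is decodable from $D$ with error $\alpha$, Fano gives $H(M_{1}\mid D)$ small, while a min-entropy (guessing-probability) lower bound controls $H(M_{1}\mid M_{[2:\sff]}D)$; subtracting yields $I(M_{[2:\sff]};M_{1}D\mid K=1,R=r^{\ast})\le \beta+h_{2}(1-\alpha)+(1-\alpha)\log\smm+\log\alpha$. Summing this with the Pinsker--Fannes bound on $I(M_{1};D_{\cB}\mid K=1,R=r^{\ast})$ gives exactly $\xi(\alpha,\beta,\gamma)$. You need to add this correctness-based piece to complete the proof.
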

We give the proof of Theorem~\ref{theo:SPIRtoNSS} in Section~\ref{sec:SPIRtoNSS}.

For the case of complete security, i.e., $(\alpha,\beta,\gamma) = (0,0,0)$, Theorem~\ref{theo:SPIRtoNSS} implies that Protocol~\ref{prot:SPIRtoSS} from CSSPIR is CSNSS.
Furthermore, Protocol~\ref{prot:SPIRtoSS} converts a linear CSSPIR protocol into a linear CSNSS protocol as follows.

\begin{coro}	\label{coro:linearSPIRtoNSS}
Let $\PSPIR$ be a {\em linear} $(\REC,\COL,\snn,\sff)$-CSSPIR protocol with rate $\RPIR$, shared randomness rate $R_{\RandSPIR}$, and position map $\tau$.
Then, the NSS protocol $\hPNSS[\PSPIR]$ defined in Protocol~\ref{prot:SPIRtoSS} with $r^\ast = 0$, $\mast = 0$
	is a {\em linear} $(\ACC,\REJ,\snn)$-CSNSS protocol with SS rate $\RSS = \RPIR$, randomness rate $R_{\RandNSS} = R_{\RandSPIR}$, and position map $\tau$.
\end{coro}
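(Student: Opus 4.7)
The approach is to invoke Theorem~\ref{theo:SPIRtoNSS} after verifying that $r^\ast = 0$ and $\mast = 0$ lie in the argmin sets of \eqref{lemmeq:correct} and \eqref{eq:m2nast}, and then to exhibit the linear structure of the resulting share encoding directly from the linear answer rule of $\PSPIR$.

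For the argmin validity, fix $\cB \in \REJ = \COL$. Since $\PSPIR$ is linear, $D_\cB = Q_{\tau^{-1}(\cB)} M + T_\cB$, so conditioning on $K = 1$ and $R = 0$ fixes the query restriction to a deterministic matrix $q^\ast \coloneqq Q_{\tau^{-1}(\cB)}(1, 0)$, and $D_\cB = q^\ast M + T_\cB$ depends only on $(M, T_\cB)$. By the user secrecy $I(K; Q_\cB) = 0$ of $\PSPIR$, the value $q^\ast$ must also lie in the support of $Q_{\tau^{-1}(\cB)}$ under $K = 2$, i.e., $q^\ast = Q_{\tau^{-1}(\cB)}(2, r')$ for some $r'$. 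Since $M$ and $T_\cB$ are independent of $(K, R)$, the conditional distribution of $D_\cB$ given $(K = 1, R = 0)$ coincides with that given $(K = 2, R = r')$, both equal to the distribution of $q^\ast M + T_\cB$. Server secrecy of $\PSPIR$ at $K = 2$ yields $I(M_1; D_\cB \mid R = r', K = 2) = 0$ since $M_1$ is then a non-target file, and the coincidence of conditional distributions transfers this to $I(M_1; D_\cB \mid R = 0, K = 1) = 0$. The same reasoning with an additional conditioning on $M_{[2:\sff]} = 0$ handles the argmin in \eqref{eq:m2nast}. Applying Theorem~\ref{theo:SPIRtoNSS} to $(\alpha, \beta, \gamma) = (0, 0, 0)$ then produces a $(\ACC, \REJ, \snn)$-CSNSS protocol with $\RSS = \RPIR$ and $R_{\RandNSS} = R_{\RandSPIR}$.

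For the linear structure, let $\sQ = (\sQ', \sQ'') \in \Fq^{\sz \times \sx} \times \Fq^{\sz \times (\sff - 1)\sx}$ denote the query matrix of $\PSPIR$ at $K = 1$ and $R = 0$, in the notation of Definition~\ref{def:projection1}. Substituting $M = (L, 0, \ldots, 0)$ and $T_j = \sH_{\tau^{-1}(j)} \RandNSS$ into the linear answer rule \eqref{eq:linearencc}, the shares generated by Protocol~\ref{prot:SPIRtoSS} become
\begin{align}
S_j = D_j = \sQ'_{\tau^{-1}(j)} L + \sH_{\tau^{-1}(j)} \RandNSS,
\end{align}
a linear function of $(L, \RandNSS) \in \Fq^{\sx + \sy}$. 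Hence the encoding map is linear with position map $\tau$, and the rates $\sx/\sz$ and $\sy/\sx$ match the PIR rate and shared randomness rate of $\PSPIR$. The main obstacle is the swap argument establishing the argmin validity; once this is in place, the invocation of Theorem~\ref{theo:SPIRtoNSS} and the identification of the linear encoding are direct.
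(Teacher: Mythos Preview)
Your approach mirrors the paper's: justify that $r^\ast=0$ and $\mast=0$ are valid choices (the paper does this via Lemma~\ref{lemm:seccrecty}, you inline essentially the same swap argument), invoke Theorem~\ref{theo:SPIRtoNSS}, and then read off linearity from the fixed-query answer formula. Your swap argument for the $r^\ast$ part is in fact cleaner than the paper's own proof of Lemma~\ref{lemm:seccrecty}, which tacitly keeps the \emph{same} $r$ when switching from $K=1$ to $K=k'$; your explicit choice of a possibly different $r'$ with $Q_{\tau^{-1}(\cB)}(2,r')=q^\ast$ is the correct way to use $I(K;Q_\cB)=0$.

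There is one small gap. For the $\mast$ argmin you write that ``the same reasoning with an additional conditioning on $M_{[2:\sff]}=0$'' suffices. The swap indeed transfers $I(M_1;D_\cB\mid M_{[2:\sff]}=0,K=1,R=0)$ to the same quantity at $(K=2,R=r')$, but server secrecy at $K=2$ only gives $I(M_1;D\mid K=2,R=r')=0$ \emph{without} conditioning on $M_2$; conditioning on the target file $M_2=0$ is not covered by server secrecy alone. One extra step is needed: since correctness makes $M_2$ a function of $D$ under $(K=2,R=r')$, we have $I(M_1;D_\cB\mid M_{[2:\sff]},K=2,R=r')\le I(M_1;D\,M_{[2:\sff]}\mid K=2,R=r')=I(M_1;D\,M_{[3:\sff]}\mid K=2,R=r')\le I(M_{[\sff]\setminus\{2\}};D\mid K=2,R=r')=0$. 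Alternatively, in the linear case you can simply observe that $I(M_1;D_\cB\mid M_{[2:\sff]}=m,K=1,R=0)$ is the same for every $m$ (the conditioning only shifts $D_\cB$ by a constant), so $m=0$ is automatically in the argmin. Either fix completes your argument; the linearity paragraph is fine and matches the paper.
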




We give the proof of Corollary~\ref{coro:linearSPIRtoNSS} in Section~\ref{sec:SPIRtoNSS}.

In Corollary~\ref{coro:linearSPIRtoNSS}, 
    the values of $r^\ast$ and $\mast$ is chosen as zero instead of \eqref{lemmeq:correct} and \eqref{eq:m2nast}.
This choice is justified by the following lemma.

\begin{lemm} \label{lemm:seccrecty}
For any $(\REC,\COL,\snn,\sff)$-CSSPIR, any $k\in[\sff]$, any $r\in\cR$, and any $\cB\in\fB$, we have 
\begin{align*}
 I(M_{[\snn]} ; D_{\cB} |  K=k, R=r) 
&= 0.
\end{align*}
\end{lemm}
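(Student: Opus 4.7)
The plan is to deduce the lemma from all three defining properties of a CSSPIR—correctness, server secrecy, and user secrecy—applied in that order. The first step combines correctness ($\alpha=0$) and server secrecy ($\beta=0$) to pin down the conditional distribution $P(D=d \mid M=m, K=k, R=r)$ pointwise as a function of $m_k$ alone. Correctness supplies a deterministic decoder $\phi_k$ with $M_k = \phi_k(D,K,R)$, so $P(D=d \mid M=m, K, R)$ is nonzero only when $\phi_k(d,K,R) = m_k$. Server secrecy, expressed via the chain rule, gives
\begin{align*}
P(D=d \mid K, R) = \sum_{m_k} P(M_k = m_k)\, P(D=d \mid M=(m_k, m_{\setminus k}), K, R),
\end{align*}
which must be independent of $m_{\setminus k}$. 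The correctness constraint collapses this sum to the single surviving term $m_k = \phi_k(d,K,R)$, so $P(D=d \mid M=(\phi_k(d,K,R), m_{\setminus k}), K, R) = |\cM|\, P(D=d \mid K, R)$, independent of $m_{\setminus k}$. Marginalising to $\cB$, the distribution $P(D_\cB \mid M, K, R)$ depends on $m$ only through $m_k$.

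The second step uses user secrecy to swap the role of $k$. Given $K=k, R=r$, the query to the colluding set is a deterministic value $Q_\cB = q^{\ast}$, and since $T_\cB$ is independent of $(K,R,M)$,
\begin{align*}
P(D_\cB = d \mid M=m, K=k, R=r) = P\bigl(g_\cB(q^{\ast}, m, T_\cB) = d\bigr),
\end{align*}
depending on $(k,r)$ only through $q^{\ast}$. User secrecy $I(K;Q_\cB)=0$ makes the distribution of $Q_\cB$ identical under every $K=k'$, so for each $k' \in [\sff]$ there is some $r' \in \cR$ with $P(R=r' \mid K=k') > 0$ that produces the same colluding-set query $q^{\ast}$. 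Then $P(D_\cB \mid M, K=k', R=r') = P(D_\cB \mid M, K=k, R=r)$, and applying the first step at $(K=k', R=r')$ shows this common distribution is a function of $m_{k'}$ alone.

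Assuming $\sff \geq 2$ and taking $k' \neq k$, the distribution $P(D_\cB \mid M, K=k, R=r)$ is simultaneously a function of $m_k$ alone and of $m_{k'}$ alone. Since the coordinates $m_k$ and $m_{k'}$ vary independently under the uniform distribution of $M$ on $\cM^{\sff}$, the only functions admitting both descriptions are the constants in $m$. Therefore $D_\cB \perp M$ given $(K=k, R=r)$, which is exactly the claim $I(M_{[\sff]}; D_\cB \mid K=k, R=r) = 0$.

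The main subtlety is that server secrecy alone gives only an averaged independence $D \perp M_{\setminus k} \mid K, R$, which does not pin down $P(D \mid M, K, R)$ pointwise—as witnessed by the textbook example in which $M_1 \oplus M_2$ is independent of each of $M_1, M_2$ individually but not of the pair. The decisive move is to use correctness to collapse the marginalisation in the first step into a single surviving summand, upgrading averaged to pointwise dependence on $m_k$; user secrecy then exploits the fact that the same colluding-set query arises under any target to close the argument.
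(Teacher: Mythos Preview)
Your proof is correct and follows the same logical skeleton as the paper's: correctness plus server secrecy force $P(D_\cB\mid M,K=k,R=r)$ to depend on $m$ only through $m_k$, and user secrecy lets you switch to some $k'\neq k$ to conclude the distribution is constant in $m$. The paper packages the same three ingredients via the chain rule $I(M;D_\cB)=I(M_k;D_\cB)+I(M_{\setminus k};D_\cB\mid M_k)$ and kills each term with a one-line mutual-information identity, whereas you work directly at the level of conditional distributions (the ``collapsing sum'' argument). One point where you are more careful than the paper: you explicitly argue that user secrecy guarantees the existence of some $r'$ with $f_\cB(k',r')=q^\ast$, while the paper writes $I(M_k;D_\cB\mid K=k,R=r)=I(M_k;D_\cB\mid K=k',R=r)$ with the \emph{same} $r$, which is not literally justified---your formulation is the correct one.
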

\begin{proof}
From the chain rule of the mutual information, we have 
\begin{align}
& I(M_{[\snn]} ; D_{\cB} |  K=k, R=r)  \nonumber \\
&= I(M_k ; D_{\cB} | K=k, R=r) \nonumber \\
   &\quad + 
    I(M_{[\snn]\setminus\{k\}} ; D_{\cB} | M_k, K=k, R=r)
    \label{eq:asdfseeee}
\end{align}
The second term of \eqref{eq:asdfseeee} is zero from 
\begin{align}
&I(M_{[\snn]\setminus\{k\}} ; D_{\cB} | M_k, K=k, R=r)\\
&= I(M_{[\snn]\setminus\{k\}} ; D_{\cB} | K=k, R=r)
= 0 ,
\end{align}
where the first equality follows from the correctness condition and the second equality is from the server secrecy of the CSSPIR protocol.
The first term of \eqref{eq:asdfseeee} is zero as follows.
From the user secrecy of the CSSPIR protocol,
    the answers $D_{\cB}$ are generated independently of $K$,
    which implies that 
    \begin{align*}
& I(M_k ; D_{\cB} | K=k, R=r) 
= I(M_k ; D_{\cB} | K=k', R=r)
= 0,
\end{align*}
where the last equality follows from the server secrecy.
Thus, we obtain the desired statement.
%
\end{proof}

Next, we give the proof idea of Theorem~\ref{theo:SPIRtoNSS} with Lemma~\ref{lemm:seccrecty}.
If we only consider the case of complete security, 
    Theorem~\ref{theo:SPIRtoNSS} for CSSPIR is simply proved as follows.
First, we prove the complete correctness of the induced NSS protocol.
For $\cA \in \fA$, the first file $M_1$ is recovered from $D_{\cA}$ since the answers $D_{\cA}$ are generated from the SPIR protocol $\PSPIR$ for the case $K=1$, and the randomness $R=r^{\ast}$ is publicly known.
Thus, the secret $L = M_1$ of the NSS protocol $\hPNSS[\PSPIR]$ is recovered from  the shares $S_{\cA} = D_{\cA}$.
Next, we prove the complete secrecy against forbidden sets $\cB \in \fB$. 
From Lemma~\ref{lemm:seccrecty}, 
    we have $I(M_{1} ; D_{\cB} | M_{[2:\sff]}=\mast, K=1, R=r^\ast) = 0$,
    which implies that the shares $S_{\cB} = D_{\cB}$ have no information of the secret $L = M_1$.
Thus, Theorem~\ref{theo:SPIRtoNSS} for CSSPIR is proved.




Now, we discuss the achievable rate and capacity of CSSPIR.
From Theorem~\ref{theo:SPIRtoNSS} for $(\alpha,\beta,\gamma) = (0,0,0)$, we obtain the following corollary.
\begin{coro}	\label{coro:SPIRtoNSS}
If there is an $(\REC,\COL,\snn,\sff)$-CSSPIR protocol with PIR rate $\RPIR$ and shared randomness rate $R_{\RandSPIR}$,
	the SS rate $\RPIR$ is achievable for $(\REC,\COL,\snn)$-NSS with randomness rate $R_{\RandNSS} = R_{\RandSPIR}$.
\end{coro}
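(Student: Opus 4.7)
The plan is to treat this corollary as the completely-secure specialization of Theorem~\ref{theo:SPIRtoNSS} combined with Protocol~\ref{prot:SPIRtoSS}, and to verify that Protocol~\ref{prot:SPIRtoSS} preserves rates when the input SPIR protocol has zero error and zero leakage. First I would apply Protocol~\ref{prot:SPIRtoSS} to the given $(\REC,\COL,\snn,\sff)$-CSSPIR protocol $\PSPIR$ to obtain the NSS protocol $\hPNSS[\PSPIR]$. Because $(\alpha,\beta,\gamma)=(0,0,0)$, rather than invoking the continuity bound $\xi(\alpha,\beta,\gamma)$ (which contains a $\log\alpha$ term that is singular at $\alpha=0$), I would verify completeness of security directly using the qualitative argument sketched after Lemma~\ref{lemm:seccrecty}.

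For correctness, I would observe that Protocol~\ref{prot:SPIRtoSS} sets $K=1$, $R=r^\ast$, $M_1=L$, $M_{[2:\sff]}=\mast$, and releases the simulated answers as the shares. The CSSPIR correctness of $\PSPIR$ ensures $M_1$ is recovered without error from $D_\cA$ for every $\cA\in\REC$; since $r^\ast$ is publicly known, the secret $L=M_1$ is decodable from the shares $S_\cA=D_\cA$, giving $\alpha(\hPNSS[\PSPIR])=0$. For secrecy, Lemma~\ref{lemm:seccrecty} yields $I(M_{[\snn]};D_\cB\mid K=1,R=r^\ast)=0$ for every $\cB\in\REJ$, so in particular $I(M_1;D_\cB\mid M_{[2:\sff]}=\mast,K=1,R=r^\ast)=0$, which gives $\beta(\hPNSS[\PSPIR])=0$. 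Thus $\hPNSS[\PSPIR]$ is a $(\REC,\COL,\snn)$-CSNSS protocol.

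Next I would read off the rate equalities from the construction itself: the share sizes $|\cS_j|$ coincide with the answer sizes $|\cD_j|$ and the NSS randomness $\RandNSS$ coincides with the SPIR random seed $\RandSPIR$. Hence
\begin{align}
\RSS(\hPNSS[\PSPIR])=\frac{\log\smm}{\sum_j\log|\cS_j|}=\frac{\log\smm}{\sum_j\log|\cD_j|}=\RPIR(\PSPIR),
\end{align}
and $R_{\RandNSS}=\log|\SRandNSS|/\log\smm=\log|\SRandSPIR|/\log\smm=R_{\RandSPIR}$. To conclude achievability in the sense of Definition of achievability for CSNSS, I would apply this construction pointwise to a sequence of CSSPIR protocols realizing rate $\RPIR$ (indexed by $\smm\to\infty$), obtaining a matching sequence of CSNSS protocols whose rates converge to $\RPIR$ with the prescribed randomness rate.

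The only subtlety I anticipate is avoiding a literal substitution $\alpha=0$ into the bound of Theorem~\ref{theo:SPIRtoNSS}, since the displayed $\xi$ contains $\log\alpha$; the cleanest resolution, as above, is to bypass $\xi$ and invoke Lemma~\ref{lemm:seccrecty} directly in the completely-secure regime. With this, no further work is needed beyond the bookkeeping of rates.
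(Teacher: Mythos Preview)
Your proposal is correct and follows essentially the same route as the paper: the paper simply states that the corollary follows from Theorem~\ref{theo:SPIRtoNSS} at $(\alpha,\beta,\gamma)=(0,0,0)$, and separately spells out the complete-security case via Lemma~\ref{lemm:seccrecty} exactly as you do (the rate identity is \eqref{eq:label2}). Your explicit remark that one should bypass the literal substitution into $\xi$ because of the $\log\alpha$ term is a valid caution that the paper glosses over; otherwise the arguments coincide.
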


An upper bound of SS rate for $(\ACC,\REJ,\snn)$-CSNSS is proved in \cite{OKT93,OK95,Paillier98} as follows.  
\begin{prop}[{\cite{OKT93,OK95,Paillier98}}] \label{prop:rssup}
For any $\ACC,\REJ\subset 2^{[\snn]}$, let $\delta(\ACC,\REJ) \coloneqq \min \{ |\cA-\cB| \mid \cA\in \ACC, \ \cB\in \REJ \}$.
Any $(\ACC,\REJ,\snn)$-CSNSS protocol $\PNSS$ satisfies
\begin{align}
\RSS(\PNSS)
	\leq \frac{1}{\snn} \delta(\ACC,\REJ).
\end{align}
In particular, for $(\srr,\stt,\snn)$-CSNSS, $\RSS \leq \srr-\stt$.
\end{prop}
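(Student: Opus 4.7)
The plan is to combine two standard entropic consequences of complete correctness and complete secrecy of $\PNSS$ with an averaging argument over pairs drawn from $\ACC\times\REJ$. Since $L$ is uniform on $\cM$, $H(L)=\log\smm$; complete correctness gives $H(L\mid S_{\cA})=0$ for every $\cA\in\ACC$, and complete secrecy gives $I(L;S_{\cB})=0$ for every $\cB\in\REJ$.

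First, I would establish the key per-pair entropic inequality: for every $\cA\in\ACC$ and $\cB\in\REJ$, $\log\smm\leq\sum_{j\in\cA\setminus\cB}\log|\cS_j|$. The proof expands $H(L)=I(L;S_{\cA})$ via the chain rule into $I(L;S_{\cA\cap\cB})+I(L;S_{\cA\setminus\cB}\mid S_{\cA\cap\cB})$; the first term vanishes because $\cA\cap\cB\subseteq\cB$, and the monotone-decreasing property of $\REJ$ then puts $\cA\cap\cB\in\REJ$, forcing $I(L;S_{\cA\cap\cB})=0$. Bounding the remaining conditional mutual information by $H(S_{\cA\setminus\cB})$ and then by the sum of log-cardinalities yields the inequality. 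Specializing to a pair $(\cA^\ast,\cB^\ast)$ attaining $\delta=|\cA^\ast\setminus\cB^\ast|$ produces $\log\smm\leq\sum_{j\in\cA^\ast\setminus\cB^\ast}\log|\cS_j|$.

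Second, to recover the factor $1/\snn$, I would sum the per-pair inequality over a balanced family of minimizing pairs. In the threshold case $(\srr,\stt,\snn)$, every $\delta$-subset $T\subset[\snn]$ equals $\cA\setminus\cB$ for $\cA=T\cup\cB\in\ACC$ and some $\cB\in\REJ$ with $|\cB|=\stt$; averaging the per-pair inequality over all $\binom{\snn}{\delta}$ such $T$'s gives $\snn\log\smm\leq\delta\sum_j\log|\cS_j|$, since each index $j\in[\snn]$ appears in exactly a $\delta/\snn$ fraction of the gaps. For general $(\ACC,\REJ)$, the monotone structure of $\ACC$ and $\REJ$ is used to exhibit a collection of pairs $(\cA_i,\cB_i)$ whose gaps $\cA_i\setminus\cB_i$ cover each index $j$ with uniform multiplicity, after which the same averaging argument delivers the bound.

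The main obstacle is this last averaging step in the general (non-threshold) setting: unlike the threshold case, an access structure need not admit a self-evidently balanced family of minimizing pairs, so one must either construct such a family combinatorially using the monotonicity of $\ACC$ and $\REJ$, or invoke a linear-programming / fractional-covering duality in the spirit of the proofs in \cite{OKT93,OK95,Paillier98}.
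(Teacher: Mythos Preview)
The paper does not prove Proposition~\ref{prop:rssup}; it only cites \cite{OKT93,OK95,Paillier98}, so there is no in-paper argument to compare against. Your per-pair inequality $\log\smm\le\sum_{j\in\cA\setminus\cB}\log|\cS_j|$ is correct and is the core content of those references, and your averaging over all $\delta$-subsets in the threshold case correctly gives $\RSS\le(\srr-\stt)/\snn$.

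Your hesitation about the general case is well placed --- and is in fact not an obstacle that can be overcome, because with $\RSS$ defined (as here) via the \emph{total} share size, the inequality $\RSS\le\delta(\ACC,\REJ)/\snn$ can fail for non-threshold structures. In the paper's own Example~\ref{exam:as} ($\snn=3$, $\delta=1$), the scheme $S_1=\text{const}$, $S_2=L+R$, $S_3=R$ with $R$ uniform on $\cM$ is a valid $(\fA,\fB,3)$-CSNSS protocol with $\RSS=\tfrac12>\tfrac13=\delta/\snn$; no balanced family of minimizing gaps can cover index~$1$, since every size-$\delta$ gap $\cA\setminus\cB$ equals $\{2\}$ or $\{3\}$, so the share $S_1$ is entirely unconstrained. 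The cited works establish exactly the per-pair bound you derived; under the standard hypothesis $|\cS_1|=\cdots=|\cS_\snn|$, a single minimizing pair already yields $\log\smm\le\delta\log|\cS_1|$ and hence $\RSS\le\delta/\snn$ with no averaging at all. The ``main obstacle'' you flag is therefore an artifact of the paper's restatement of the proposition, not a defect in your proof strategy.
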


As a corollary of Theorem~\ref{theo:SPIRtoNSS} and Proposition~\ref{prop:rssup}, we obtain an upper bound of the CSSPIR capacity.
\begin{coro} \label{coro:capacity}
For any $\ACC,\REJ\subset 2^{[\snn]}$ and $\delta(\ACC,\REJ)$ defined in Proposition~\ref{prop:rssup}, 
    we have
$$C^{(\REC,\COL,\snn,\sff)}_{\mathrm{SPIR}} 
			\leq  C^{(\REC,\COL,\snn)}_{\mathrm{NSS}} 
			\leq \frac{1}{\snn}\delta(\ACC,\REJ).$$
\end{coro}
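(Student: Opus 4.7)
The plan is to obtain the two inequalities separately, each as an essentially immediate consequence of a previously established result, and then chain them.

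For the first inequality $C^{(\REC,\COL,\snn,\sff)}_{\mathrm{SPIR}} \leq C^{(\REC,\COL,\snn)}_{\mathrm{NSS}}$, I would start from the definition of CSSPIR capacity as the supremum over achievable rates. Fix any achievable rate $R$ for $(\REC,\COL,\snn,\sff)$-CSSPIR. By definition there exists a sequence $\{\PSPIR\}_{\smm\in\mathbb{N}}$ of $(\REC,\COL,\snn,\sff)$-CSSPIR protocols with $\RPIR(\PSPIR)\to R$ as $\smm\to\infty$. Apply Protocol~\ref{prot:SPIRtoSS} to each $\PSPIR$ to obtain a sequence of induced NSS protocols $\hPNSS[\PSPIR]$. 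By Corollary~\ref{coro:SPIRtoNSS} (i.e., the $(\alpha,\beta,\gamma)=(0,0,0)$ case of Theorem~\ref{theo:SPIRtoNSS}), each $\hPNSS[\PSPIR]$ is a $(\ACC,\REJ,\snn)$-CSNSS protocol with $\RSS(\hPNSS[\PSPIR])=\RPIR(\PSPIR)$. Hence $R$ is achievable for $(\ACC,\REJ,\snn)$-CSNSS, and taking the supremum over achievable $R$ yields the claimed inequality.

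For the second inequality $C^{(\REC,\COL,\snn)}_{\mathrm{NSS}} \leq \frac{1}{\snn}\delta(\ACC,\REJ)$, I would invoke Proposition~\ref{prop:rssup} directly. That proposition gives $\RSS(\PNSS)\leq \delta(\ACC,\REJ)/\snn$ for every $(\ACC,\REJ,\snn)$-CSNSS protocol $\PNSS$. Since an achievable SS rate is a limit of rates $\RSS(\PNSS)$ of actual CSNSS protocols, every achievable rate is bounded above by $\delta(\ACC,\REJ)/\snn$. Taking the supremum of the achievable rates gives the inequality on the capacity.

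Combining the two displayed inequalities gives the stated chain. I do not anticipate any real obstacle: the whole argument is a routine passage from a one-shot conversion (Corollary~\ref{coro:SPIRtoNSS}) and a one-shot bound (Proposition~\ref{prop:rssup}) to their capacity-level counterparts, and both limits pass through trivially because the rate is preserved exactly by Protocol~\ref{prot:SPIRtoSS} and because the Proposition~\ref{prop:rssup} bound depends only on the access structure and $\snn$, not on $\smm$. The only minor point to make explicit is that the conversion preserves complete security (hence membership in the CSNSS class) so that the converted sequence is eligible to certify achievability for NSS.
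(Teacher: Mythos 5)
Your proposal is correct and matches the paper's approach exactly: the paper states Corollary~\ref{coro:capacity} as an immediate consequence of Theorem~\ref{theo:SPIRtoNSS} (via Corollary~\ref{coro:SPIRtoNSS}) and Proposition~\ref{prop:rssup}, leaving the routine passage from rate-preserving protocol conversion and the per-protocol bound to the capacity-level inequalities implicit. You have simply spelled out those standard limiting steps, so nothing is missing or different in substance.
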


Corollary~\ref{coro:capacity} is applicable for any access structure and is simply characterized by the access structure.
For example, when a CSSPIR protocol has the access structure $\snn = 3$, $\fA = \{  \{2,3\}, \{1,2,3\} \}$, and $\fB = \{\emptyset,  \{1\},  \{2\}, \{3\}, \{1,2\} \}$ given in Example~\ref{exam:as}, Corollary~\ref{coro:capacity} implies that the SPIR rate is upper bounded by $1/3$ because $\snn = 3$ and $\delta(\fA,\fB) = |\{2,3\} - \{2\} | = 1$.

Furthermore, the upper bound in Corollary~\ref{coro:capacity} is tight for the threshold case as follows.

\begin{coro} \label{coro:cap}
The capacity of $\snn$-server CSSPIR with $\srr$ responsive servers and $\stt$ colluding servers is $(\srr-\stt)/\snn$.
\end{coro}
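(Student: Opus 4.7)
The plan is to prove Corollary~\ref{coro:cap} by combining our new upper bound from Corollary~\ref{coro:capacity} with the known achievability result of Tajeddine et al.~\cite{TGKFH19}.

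First I would establish the upper bound. Specializing Corollary~\ref{coro:capacity} to the threshold access structure $\REC = \{\cA \subset [\snn] \mid |\cA| \geq \srr\}$ and $\COL = \{\cB \subset [\snn] \mid |\cB| \leq \stt\}$, it suffices to compute $\delta(\REC,\COL) = \min\{|\cA\setminus\cB| : \cA\in\REC, \cB\in\COL\}$. For any $\cA\in\REC$ and $\cB\in\COL$, the inclusion–exclusion bound $|\cA\setminus\cB| \geq |\cA| - |\cB| \geq \srr - \stt$ gives a lower bound of $\srr - \stt$; choosing $\cB \subset \cA$ with $|\cA| = \srr$ and $|\cB| = \stt$ achieves equality, so $\delta(\REC,\COL) = \srr-\stt$. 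Plugging into Corollary~\ref{coro:capacity} yields $C^{(\REC,\COL,\snn,\sff)}_{\mathrm{SPIR}} \leq (\srr-\stt)/\snn$.

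Next I would verify the matching lower bound by invoking the protocol of Tajeddine et al.~\cite{TGKFH19}, which was already mentioned in the introduction as constructing a CSSPIR protocol with $\srr$ responsive and $\stt$ colluding servers achieving rate $(\srr-\stt)/\snn$. A small bookkeeping step is needed here: as noted in footnote~2 of the excerpt, \cite{TGKFH19} uses $\srr$ for the number of \emph{unresponsive} servers (so their $\srr$ corresponds to $\snn - \srr$ in our convention) and defines the PIR rate with denominator equal to the bit size of the responsive servers only. After translating their notation and normalizing their rate so that the denominator is the total size over all $\snn$ servers (as in our Definition~\ref{defi:spir}), their construction yields a $(\srr,\stt,\snn,\sff)$-CSSPIR protocol whose rate equals $(\srr-\stt)/\snn$, establishing achievability.

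Combining the two bounds gives $C = (\srr-\stt)/\snn$, which is the desired statement. The main (and only nontrivial) conceptual step is the upper bound, which already follows entirely from the chain Theorem~\ref{theo:SPIRtoNSS} $\Rightarrow$ Corollary~\ref{coro:capacity} once $\delta$ is evaluated for thresholds; the achievability is off-the-shelf modulo the notation translation. Thus no genuine obstacle arises—the proof is essentially a two-line deduction.
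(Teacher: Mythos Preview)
Your proposal is correct and follows essentially the same approach as the paper: the converse is obtained by specializing Corollary~\ref{coro:capacity} to the threshold structure (where $\delta(\REC,\COL)=\srr-\stt$, as also noted in Proposition~\ref{prop:rssup}), and achievability is taken off-the-shelf from the Tajeddine et al.\ construction \cite{TGKFH19} (the paper specializes their rate $(\srr-\skk-2\sbb-\stt+1)/\snn$ at $(\skk,\sbb)=(1,0)$). Your extra details on computing $\delta$ and on the notation translation match the paper's remarks and footnote.
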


The converse part is proved by Corollary~\ref{coro:capacity}.
The achievability part of Corollary~\ref{coro:cap} follows from the linear CSSPIR protocol of Tajeddine et al. \cite{TGKFH19}.
Tajeddine et al. \cite{TGKFH19} constructed a protocol of symmetric/non-symmetric CSSPIR from coded storage with colluding, byzantine, and unresponsive servers, with PIR rate 
\begin{align}
    \frac{\srr-\skk-2\sbb-\stt+1}{\snn}, 
 \label{eq:tajedd}
\end{align}
    where
	$\srr$ is the number of responding servers,
	$\skk$ is the code rate of the coded storage,
	$\sbb$ is the number of byzantine servers.
When $(\skk,\sbb) = (1,0)$, their protocol is an $(\srr,\stt,\snn,\sff)$-CSSPIR protocol and achieves the PIR rate $(\srr-\stt)/\snn$.



Holzbaur et al. \cite[Theorem 4]{HFLH21} proved that the rate \eqref{eq:tajedd} is optimal for linear CSSPIR with additive randomness.
Thus, when $(\skk,\sbb) = (1,0)$, the capacity for linear CSSPIR with additive randomness is the same as Corollary~\ref{coro:cap}.
Our result generalizes this result because Corollary~\ref{coro:cap} holds without the assumptions of the linearity of protocols and the additivity of the randomness.

\begin{remark}
%
%
The same implication from SPIR to NSS of Theorem~\ref{theo:SPIRtoNSS} is applicable for multi-round SPIR by the same proof.
Especially, since the multi-round SPIR capacity is greater than the one-round SPIR capacity, 
	our result also implies that the multi-round capacity of $(\srr,\stt,\snn)$-CSSPIR is $(\srr-\stt)/\snn$ which is the same as the one-round capacity.
Moreover, Theorem~\ref{theo:SPIRtoNSS} is also applicable even for multi-round SPIR with coded database.
However, for simplicity, this paper only discusses one-round protocols when all files replicated in each server. 
\end{remark}

\subsection{Conversion from linear CSNSS to MMSP} \label{sec:main3}

In this subsection, 
	we give the conversion from linear CSNSS to MMSP.

\begin{prot} \label{prot:NSStoMMSP}
Let $\PNSS$ be a linear CSNSS protocol defined from a linear encoder $f:\FF_q^{\sx+\sy}\to\FF_q^{\sz}$ and a map $\tau:[\sz]\to[\snn]$.
Let $\sG_f$ be the matrix representation of the linear map $f$.
Then, an MMSP $\hPMMSP[\PNSS] \coloneqq (\sG_f, \tau)$ is defined.
\end{prot}


\begin{theo}	\label{theo:NSS=MMSP}
Given a linear $(\ACC,\REJ,\snn)$-CSNSS protocol $\PNSS$ with SS rate $\RSS$,
	the MMSP $\hPMMSP[\PNSS]$ defined in Protocol~\ref{prot:NSStoMMSP} 
		is an $(\ACC,\REJ,\snn)$-MMSP with MMSP rate $\RMMSP = \RSS$.
\end{theo}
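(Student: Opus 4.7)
The plan is to verify that $\hPMMSP[\PNSS] = (\sG_f, \tau)$ satisfies the three defining properties of an $(\ACC,\REJ,\snn)$-MMSP: the rate identity $\RMMSP = \RSS$, acceptance of $\ACC$, and rejection of $\REJ$. I decompose $\sG_f = (\sG', \sG'') \in \FF_q^{\sz\times \sx} \times \FF_q^{\sz\times \sy}$ according to the secret/randomness split, so that the encoder acts by $(Z_1, \ldots, Z_\sz)^T = \sG' L^T + \sG'' \RandNSS^T$. The rate identity $\sx/\sz = \RSS(\PNSS)$ is immediate from the definition of the linear NSS rate, so the only content is the two row-span conditions.

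For acceptance, fix $\cA \in \ACC$ and set $\sM \coloneqq \sG_{f, \tau^{-1}(\cA)}$. Since $L$ and $\RandNSS$ are uniform on $\FF_q^\sx$ and $\FF_q^\sy$, every pair $(L,\RandNSS)$ has positive probability, so the zero-error correctness of $\PNSS$ forces any two inputs yielding the same shares $\sM\binom{L}{\RandNSS}$ to agree in their $L$-components. Equivalently, $\ker \sM \subset \{0\} \times \FF_q^\sy$, i.e., the projection of $\ker \sM$ onto the first $\sx$ coordinates is trivial. Taking the orthogonal complement and using $\rowspan \sM = (\ker \sM)^\perp$, this is precisely $\rowspan \sM \supset \FF_q^\sx \times \{0\} = \spann \cE$, which is the acceptance condition at $\cA$.

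For rejection, fix $\cB \in \REJ$, set $\sN \coloneqq \sG_{f, \tau^{-1}(\cB)}$, and split $\sN = (\sN_1, \sN_2)$ with $\sN_1 \in \FF_q^{|\tau^{-1}(\cB)|\times \sx}$. The shares indexed by $\cB$ then read $\sN_1 L^T + \sN_2 \RandNSS^T$, and conditioned on $L=\ell$ they are uniformly distributed on the coset $\sN_1 \ell^T + \mathrm{Im}(\sN_2)$. The zero-secrecy condition forces all these cosets to coincide, yielding the column-space inclusion $\mathrm{Im}(\sN_1) \subset \mathrm{Im}(\sN_2)$; dualizing, the left kernel of $\sN_2$ is contained in the left kernel of $\sN_1$. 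Any element $(x,0) \in \spann \cE \cap \rowspan \sN$ is then of the form $\mathbf{v}\sN$ with $\mathbf{v}\sN_2 = 0$, so $\mathbf{v}\sN_1 = 0$ by the kernel inclusion, forcing $x=0$; hence $\spann \cE \cap \rowspan \sN = \{0\}$.

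The main obstacle is purely the careful translation of the probabilistic complete-security conditions into algebraic statements about $\sG_f$: converting probability-one correctness into the kernel containment $\ker \sM \subset \{0\}\times \FF_q^\sy$, and zero-information leakage into the column-image inclusion $\mathrm{Im}(\sN_1) \subset \mathrm{Im}(\sN_2)$, both of which rely on the uniform full-support assumption on $L$ and $\RandNSS$. Once these algebraic reformulations are in place, the acceptance and rejection conditions follow from a standard row-span/left-kernel duality argument.
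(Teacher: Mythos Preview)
Your proof is correct. The acceptance argument is essentially the same as the paper's: the paper exhibits a decoding matrix $\sK_h$ with $\sK_h \sG_{\tau^{-1}(\cA)} = (\sI_\sx, \sO)$, which is the dual statement to your kernel containment $\ker \sM \subset \{0\}\times\FF_q^\sy$.

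For rejection, your route differs from the paper's. The paper works through the entropy--rank correspondence of Proposition~\ref{prop:rvrank}: from $I(L;S_\cB)=0$ it deduces $H(S_\cB)=H(S_\cB\mid L=\bm{\ell})$, converts both sides to ranks, and obtains $\rank \sG_{\tau^{-1}(\cB)} = \rank \sG''_{\tau^{-1}(\cB)}$, which is shown in Lemma~\ref{prop:rej_prop} to be equivalent to the rejection condition. You instead argue directly on distributions: zero leakage forces the cosets $\sN_1\ell^T + \mathrm{Im}(\sN_2)$ to coincide, giving the column-space inclusion $\mathrm{Im}(\sN_1)\subset\mathrm{Im}(\sN_2)$, and then a left-kernel duality yields $\spann\cE \cap \rowspan \sN = \{0\}$ without passing through the rank reformulation. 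Your argument is more elementary in that it avoids the auxiliary Lemma~\ref{prop:rej_prop} and the entropy machinery; the paper's approach has the advantage of reusing the same rank equivalence (Lemma~\ref{prop:rej_prop}) that is needed elsewhere, e.g.\ in the proof of Lemma~\ref{lemm:secrecyy}. Note that your column-space inclusion $\mathrm{Im}(\sN_1)\subset\mathrm{Im}(\sN_2)$ is in fact equivalent to the paper's rank condition $\rank(\sN_1,\sN_2)=\rank \sN_2$, so the two arguments meet at the same algebraic fact, just reached and exploited differently.
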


We give the proof of Theorem~\ref{theo:NSS=MMSP} in Section~\ref{sec:NSS=MMSP}.



\subsection{Conversion from MMSP to projected linear CSSPIR} \label{sec:main2}


In this subsection, we give the conversion from MMSP to projected linear CSSPIR.
First, we define a projected linear SPIR protocol from MMSP. 

\begin{prot} \label{prot:MMSPtoSPIR}
Let $\cP  = (\sG, \tau)$ be an MMSP. 
We denote 
	$\sG = (\sG', \sG'') \in \FF_q^{\sz\times \sx} \times \FF_q^{\sz\times \sy}.$
Let $\sff$ be an arbitrary positive integer at least $2$.
Then, a projected linear CSSPIR protocol $\hPSPIR[\cP]$ is defined by the triplet $(\sG'', \sG', \tau)$ of randomness encoder $\sG''\in\Fq^{\sz\times \sy}$, user's query encoder $\sG'\in \Fq^{\sz\times \sx }$, and position map $\tau: [\sz] \to [\snn]$.

\end{prot}


The SPIR protocol defined in Protocol~\ref{prot:MMSPtoSPIR} is completely secure by the following theorem.

\begin{theo}	\label{theo:MMSPtoSPIR}
Let $\cP$ be an $(\ACC,\REJ,\snn)$-MMSP with 
a matrix $\sG \in \Fq^{\sz \times (\sx+\sy)}$,
a position map $\tau:[\sz]\to[\sx]$,
and the rate $\RMMSP=\sx/\sz$.
Then, the SPIR protocol $\hPSPIR[\cP]$ defined in Protocol~\ref{prot:MMSPtoSPIR}
	is 
	an $(\REC,\COL,\snn,\sff)$-CSSPIR protocol $\PSPIR$ for any $\sff \geq 2$ with PIR rate $\RPIR = \RMMSP= \sx/\sz$ and shared randomness rate $\sy/\sx$.
\end{theo}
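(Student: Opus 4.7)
The plan is to derive the three SPIR security properties---correctness, server secrecy, and user secrecy---directly from the accept/reject properties of the underlying MMSP after rewriting the answer vector in closed form; the two rate claims will then follow by inspection of dimensions. Substituting $\sH=\sG''$ and $\sJ=\sG'$ into the projected linear SPIR template of Definition~\ref{defi:prj_lin_spir}, the query is $Q=\sG'\sE_K+\sG'' R$ and the shared randomness is $T=\sG''\RandSPIR$, so the stacked answer becomes
\begin{align}
D \;=\; QM+T \;=\; \sG' M_K + \sG''(RM+\RandSPIR).
\end{align}
Because $\RandSPIR$ is uniform on $\Fq^\sy$ and independent of $(M,R,K)$, the vector $V\coloneqq RM+\RandSPIR$ is, conditionally on $(K,R)$, uniform on $\Fq^\sy$ and independent of $M$; consequently $D=\sG' M_K+\sG'' V$ with $V$ a fresh uniform vector.

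Correctness and server secrecy are then short. For $\cA\in\ACC$, the acceptance of $\cE$ in $\rowspan \sG_{\tau^{-1}(\cA)}$ supplies a matrix $\sW$ with $\sW\sG'_{\tau^{-1}(\cA)}=\bI_\sx$ and $\sW\sG''_{\tau^{-1}(\cA)}=0$; applying $\sW$ to $D_\cA$ recovers $M_K$ exactly and forces $\alpha(\PSPIR)=0$. Server secrecy follows because $V$ is independent of $M$ given $(K,R)$, so $D$ is a deterministic function of $(M_K,V)$ in which $M_{[\sff]\setminus\{K\}}$ plays no role, yielding $I(D;M_{[\sff]\setminus\{K\}}\mid R,K)=0$. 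The rates are also immediate: one file contains $\sx\log q$ bits, the aggregate answer $\sz\log q$ bits, and the random seed $\sy\log q$ bits, so $\RPIR=\sx/\sz=\RMMSP$ and the shared-randomness rate equals $\sy/\sx$.

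The main obstacle is the user secrecy. For $\cB\in\REJ$ the query block is $Q_{\tau^{-1}(\cB)}=\sG'_{\tau^{-1}(\cB)}\sE_K+\sG''_{\tau^{-1}(\cB)}R$; because the columns of $R$ are i.i.d.\ uniform on $\Fq^\sy$, the columns of $\sG''_{\tau^{-1}(\cB)}R$ are independently uniform on the column space of $\sG''_{\tau^{-1}(\cB)}$. Hence the joint distribution of $Q_{\tau^{-1}(\cB)}$ is $K$-invariant once every column of $\sG'_{\tau^{-1}(\cB)}$ lies in the column space of $\sG''_{\tau^{-1}(\cB)}$. I would obtain this column-space inclusion as the linear-algebraic dual of the rejection condition: $\spann\cE\cap\rowspan\sG_{\tau^{-1}(\cB)}=\{0\}$ is equivalent to the statement that any row vector $w$ with $w^\top\sG''_{\tau^{-1}(\cB)}=0$ must also satisfy $w^\top\sG'_{\tau^{-1}(\cB)}=0$, i.e., the left kernel of $\sG''_{\tau^{-1}(\cB)}$ is contained in that of $\sG'_{\tau^{-1}(\cB)}$; taking orthogonal complements then gives $\mathrm{colspan}(\sG'_{\tau^{-1}(\cB)})\subseteq\mathrm{colspan}(\sG''_{\tau^{-1}(\cB)})$, as required. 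This forces $I(K;Q_\cB)=0$ and completes the proof.
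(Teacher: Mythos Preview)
Your proof is correct and follows essentially the same route as the paper: you rewrite the answer as $D=\sG' M_K+\sG'' V$ with $V=RM+\RandSPIR$ uniform and independent of $M$, then use acceptance for correctness, the independence of $V$ for server secrecy, and the rejection condition for user secrecy. The only cosmetic difference is in how the rejection condition is exploited for user secrecy: the paper first restates rejection as the rank equality $\rank\sG''_{\tau^{-1}(\cB)}=\rank\sG_{\tau^{-1}(\cB)}$ (its Lemma~\ref{prop:rej_prop}) and then runs an entropy computation (its Lemma~\ref{lemm:secrecyy}), whereas you phrase the same fact as the column-space inclusion $\mathrm{colspan}(\sG'_{\tau^{-1}(\cB)})\subseteq\mathrm{colspan}(\sG''_{\tau^{-1}(\cB)})$ obtained by dualizing the left-kernel containment, and conclude directly that the uniform shift $\sG''_{\tau^{-1}(\cB)}R$ absorbs $\sG'_{\tau^{-1}(\cB)}\sE_K$. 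These are equivalent formulations of one argument.
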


We give the proof of Theorem~\ref{theo:MMSPtoSPIR} in Section~\ref{sec:NSStoSPIR}.

\subsection{Equivalence of linear CSSPIR, linear CSNSS, and MMSP} \label{sec:equiv_subsec}

Combining Corollary~\ref{coro:linearSPIRtoNSS}, Theorem~\ref{theo:NSS=MMSP}, and Theorem~\ref{theo:MMSPtoSPIR},
	we obtain the equivalence of linear CSSPIR, linear CSNSS, and MMSP.

\begin{coro}	\label{coro:equiv}
Let
    $\snn$ be a positive integer at least $2$, 
    $(\ACC,\REJ)$ be an access structure on $[\snn]$, 
    and 
    $\tau$ be a map from $[\sz]$ to $[\snn]$.

The following conditions are equivalent.
\begin{enumerate}[label=(\alph*)]
\item For some $\sff \geq 2$, there exists a linear $(\ACC,\REJ,\snn,\sff)$-CSSPIR protocol with 
    the position map $\tau$,
    the rate $\RPIR = \sx/\sz$, 
    and 
    the shared randomness rate $R_{\RandSPIR} = \sy/\sx$.

\item There exists a linear $(\ACC,\REJ,\snn)$-CSNSS protocol with 
    the position map $\tau$,
    the rate $\RSS = \sx/\sz$,
    and 
    the randomness rate $R_{\RandNSS} = \sy/\sx$.

\item There exists a $(\ACC,\REJ,\snn)$-MMSP with 
    the matrix $\sG \in \Fq^{\sz\times (\sx+\sy)}$
    the position map $\tau$,
    and 
    rate $\RMMSP  = \sx/\sz$.

\item For any $\sff \geq 2$, there exists a projected linear $(\ACC,\REJ,\snn,\sff)$-CSSPIR protocol 
with 
    the position map $\tau$,
    the rate $\RPIR = \sx/\sz$, 
    and 
    the shared randomness rate $R_{\RandSPIR} = \sy/\sx$.
\end{enumerate}
\end{coro}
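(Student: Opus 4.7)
The plan is to establish the equivalence by chaining the three conversion results already proved in the paper into a cycle of implications
$(a) \Rightarrow (b) \Rightarrow (c) \Rightarrow (d) \Rightarrow (a)$,
after carefully checking that the parameters $(\sx, \sy, \sz, \tau)$ and the associated rates are preserved through every step.

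First, I would derive $(a) \Rightarrow (b)$ directly from Corollary~\ref{coro:linearSPIRtoNSS}. Starting from a linear $(\ACC,\REJ,\snn,\sff)$-CSSPIR protocol with PIR rate $\sx/\sz$, shared randomness rate $\sy/\sx$, and position map $\tau$, Protocol~\ref{prot:SPIRtoSS} (instantiated with $r^\ast = 0$ and $\mast = 0$) yields a linear $(\ACC,\REJ,\snn)$-CSNSS protocol with the same position map $\tau$, SS rate $\sx/\sz$, and randomness rate $\sy/\sx$. No extra work is required beyond invoking the corollary.

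Next, $(b) \Rightarrow (c)$ is given by Theorem~\ref{theo:NSS=MMSP}: the encoder $f : \FF_q^{\sx+\sy} \to \FF_q^{\sz}$ of the linear CSNSS protocol has matrix representation $\sG_f \in \FF_q^{\sz \times (\sx+\sy)}$, and Protocol~\ref{prot:NSStoMMSP} produces an $(\ACC,\REJ,\snn)$-MMSP $(\sG_f,\tau)$ of the required format with MMSP rate $\sx/\sz$. For $(c) \Rightarrow (d)$, I apply Theorem~\ref{theo:MMSPtoSPIR} to the MMSP $(\sG,\tau)$ with $\sG = (\sG',\sG'') \in \FF_q^{\sz\times \sx} \times \FF_q^{\sz\times \sy}$; Protocol~\ref{prot:MMSPtoSPIR} constructs, for every $\sff \geq 2$, a projected linear $(\ACC,\REJ,\snn,\sff)$-CSSPIR protocol with position map $\tau$, PIR rate $\sx/\sz$, and shared randomness rate $\sy/\sx$. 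This is precisely the statement $(d)$.

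Finally, $(d) \Rightarrow (a)$ is essentially definitional: Definition~\ref{defi:prj_lin_spir} makes projected linear SPIR a subclass of linear SPIR (the user query is just the specific form $\sJ \sE_k + \sH R$ of the general linear query $Q$). Hence any projected linear CSSPIR protocol from $(d)$ is already a linear CSSPIR protocol with identical position map, PIR rate, and shared randomness rate, and since $(d)$ provides such a protocol for every $\sff \geq 2$, we obtain $(a)$ for some $\sff \geq 2$ (in fact, for all $\sff \geq 2$).

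There is no genuine obstacle here; the only technical care needed is bookkeeping. I must check at each implication that (i) the matrix field $\FF_q$, (ii) the triple of dimensions $(\sx,\sy,\sz)$, (iii) the position map $\tau$, and (iv) the two rates $\sx/\sz$ and $\sy/\sx$ are transported unchanged. Each of the three cited conversions was formulated precisely so that these quantities are preserved, which is what allows a clean cyclic argument rather than a weaker one-way chain.
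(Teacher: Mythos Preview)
Your proposal is correct and follows exactly the same cycle of implications as the paper's own proof: $(a)\Rightarrow(b)$ via Corollary~\ref{coro:linearSPIRtoNSS}, $(b)\Rightarrow(c)$ via Theorem~\ref{theo:NSS=MMSP}, $(c)\Rightarrow(d)$ via Theorem~\ref{theo:MMSPtoSPIR}, and $(d)\Rightarrow(a)$ trivially since projected linear SPIR is a subclass of linear SPIR. Your additional bookkeeping of the parameters $(\sx,\sy,\sz,\tau)$ and the rates is welcome but not strictly needed beyond what the cited results already guarantee.
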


Corollary~\ref{coro:equiv} is proved as follows.
The relations $(a)\Rightarrow(b)$, $(b)\Rightarrow(c)$, and $(c)\Rightarrow(d)$ follow from Corollary~\ref{coro:linearSPIRtoNSS}, Theorem~\ref{theo:NSS=MMSP}, and Theorem~\ref{theo:MMSPtoSPIR}, respectively,
    and $(d)\Rightarrow(a)$ is trivial.


For any linear CSNSS protocol $\PNSS$, we have
\begin{align}
\hPNSS[\hPSPIR[\hPMMSP[\PNSS]]]=\PNSS.
\end{align}
That is, the composite map $\hPNSS\circ \hPSPIR \circ \hPMMSP$ is the identity map on linear NSS.
Similarly, the composite map $\hPSPIR\circ \hPMMSP\circ \hPNSS$ is 
    the projection $\hat{p}$ into projected linear SPIR, defined in Definition~\ref{def:projection1}, i.e.,
    for any a projected linear CSSPIR protocol $\PSPIR$, we have 
\begin{align}
\hPSPIR[\hPMMSP[\hPNSS[\PSPIR]]]= \hat{p}(\PSPIR) = \PSPIR.
\end{align}


A linear CSSPIR protocol $\PSPIR$ and its projected protocol $\hat{p}(\PSPIR)$ are compared as follows.
In the two protocols,
    the sizes of queries, answers, and random seed are the same. 
However, 
    from definitions of linear SPIR and projected linear SPIR,
the projected protocol is more concisely described. 
That is, 
    whereas the random query matrices $Q \in \Fq^{\sz\times \sff\sx}$ are necessary to be characterized in the original CSSPIR protocol $\PSPIR$,
    all queries of the projected protocol  $\hat{p}(\PSPIR)$ can be described only with a small matrix $\sJ\in\Fq^{\sz\times \sx}$.
On the other hand, 
    since the size of user's private randomness $R\in\Fq^{\sy\times \sff\sx}$ in the projected protocol is increasing with the number of the files $\sff$, 
    one possible advantage of linear CSSPIR protocols would be the smaller size of this randomness.

Next, we discuss the relation between projected linear CSSPIR and linear CSNSS as follows.
In both protocols, the minimum information to describe encoders is 
    two matrices in $\Fq^{\sz\times \sx}$ and $\Fq^{\sz\times \sy}$.
The matrix in $\Fq^{\sz\times \sx}$ 
    is used for encoding messages (targeted file for CSSPIR and the secret for CSNSS)
    and
    the matrix in $\Fq^{\sz\times \sy}$ for encoding randomness (and non-targeted files for CSSPIR).
From this similarity,
    the server secrecy of CSSPIR and the secrecy of CSNSS are guaranteed in the same context.
On the other hand, the uniform randomness for CSNSS is the dealer's randomness $\RandNSS\in\Fq^{\sy}$,
    while that for CSSPIR is 
    the random seed $\RandSPIR\in\Fq^{\sy}$ and the user's private randomness $R\in\Fq^{\sy\times\sff\sx}$.
In CSSPIR, the additional randomness $R$ is required for guaranteeing the user secrecy.

Farr\`as et al. proved the existence of a linear CSNSS protocol for any access structure \cite{FHKP17}.
\begin{prop}[{\cite{FHKP17}}] \label{prop:anyNSS}
For any access structure $(\fA,\fB)$,
    there exists a linear $(\REC,\COL,\snn)$-CSNSS protocol.
\end{prop}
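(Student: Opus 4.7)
The plan is to reduce the claim to the classical existence of monotone span programs for arbitrary monotone access structures, and then to invoke the equivalence established in Corollary~\ref{coro:equiv}. Given an arbitrary access structure $(\fA,\fB)$ on $[\snn]$, I would first observe that $\fB^\ast \coloneqq 2^{[\snn]} \setminus \fA$ is monotone decreasing (since $\fA$ is monotone increasing) and contains $\fB$ (since $\fA \cap \fB = \emptyset$). Thus $(\fA,\fB^\ast)$ is a \emph{perfect} access structure on $[\snn]$, i.e.\ $\fA \cup \fB^\ast = 2^{[\snn]}$, and any linear $(\fA,\fB^\ast,\snn)$-CSNSS protocol is automatically a linear $(\fA,\fB,\snn)$-CSNSS protocol, because secrecy against the larger family $\fB^\ast$ trivially implies secrecy against the subfamily $\fB$. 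Hence it suffices to handle the perfect-access-structure case.

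Next, I would invoke the classical result that every monotone increasing collection $\fA \subseteq 2^{[\snn]}$ admits a monotone span program over some finite field $\Fq$ (see \cite{KW93,Beimel_thesis}). Viewing such a monotone span program as an MMSP in the sense of Definition~\ref{defi:MMSP} with $\sx = 1$ and $\cE = \{\mathbf{e}_1\}$, its standard acceptance and rejection conditions translate literally into those of Definition~\ref{defi:MMSP}: the span program computes the characteristic function of $\fA$, so the target vector $\mathbf{e}_1$ lies in $\rowspan \sG_{\tau^{-1}(\cA)}$ for every $\cA \in \fA$, and $\spann \cE \cap \rowspan \sG_{\tau^{-1}(\cB)} = \{0\}$ for every $\cB \in \fB^\ast$. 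Because $\fB \subseteq \fB^\ast$, this is a fortiori an $(\fA,\fB,\snn)$-MMSP.

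Finally, applying the implication $(c) \Rightarrow (b)$ of Corollary~\ref{coro:equiv} (which chains through $(c) \Rightarrow (d) \Rightarrow (a) \Rightarrow (b)$) to this MMSP yields a linear $(\fA,\fB,\snn)$-CSNSS protocol, completing the argument. The main obstacle is not the reduction, which is essentially bookkeeping, but the classical existence of a monotone span program for an arbitrary monotone collection $\fA$; this is a well-known but non-trivial ingredient, usually proved by first writing the characteristic function of $\fA$ as a monotone Boolean formula and then translating the AND/OR structure of the formula into the block structure of the span-program matrix. Once this classical fact is granted, the paper's equivalence machinery does the rest.
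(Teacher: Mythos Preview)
Your argument is correct, but it differs from what the paper does: the paper does not prove Proposition~\ref{prop:anyNSS} at all. It is stated purely as an external result, attributed to Farr\`as et al.~\cite{FHKP17}, and no argument is given. So you are supplying a proof where the paper only supplies a citation.

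Your route is sound and self-contained modulo the classical existence of a monotone span program for every monotone Boolean function \cite{KW93,Beimel_thesis}, which you correctly flag as the one non-trivial external ingredient. Two small observations. First, the passage through Corollary~\ref{coro:equiv} (and hence through SPIR, via $(c)\Rightarrow(d)\Rightarrow(a)\Rightarrow(b)$) is a detour: once you have an $(\fA,\fB^\ast,\snn)$-MMSP $(\sG,\tau)$ with $\sx=1$, the direct construction $S=\sG(L,\RandNSS)^{\top}$ already yields a linear perfect SS for $(\fA,\fB^\ast)$; this is precisely the classical Brickell/Karchmer--Wigderson equivalence the paper itself acknowledges in the remark following Definition~\ref{defi:MMSP}. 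Invoking Corollary~\ref{coro:equiv} is not wrong, just heavier than necessary. Second, your reduction $\fB\mapsto\fB^\ast=2^{[\snn]}\setminus\fA$ is the clean step that collapses the non-perfect case to the perfect one; the cited reference \cite{FHKP17} works directly in the non-perfect setting, but for mere existence (as opposed to rate bounds) your reduction is the simplest way in.
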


Combining 
    Proposition~\ref{prop:anyNSS}
    and
    Corollary~\ref{coro:equiv},
    we obtain the following corollary.

\begin{coro}
For any access structure $(\ACC,\REJ)$ on $[\snn]$ and any $\sff\geq 2$, there exists an $(\REC,\COL,\snn,\sff)$-CSSPIR protocol.
\end{coro}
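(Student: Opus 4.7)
The plan is to simply chain the cited existence result with the equivalence corollary proved in this subsection. Given an access structure $(\ACC,\REJ)$ on $[\snn]$, Proposition~\ref{prop:anyNSS} of Farr\`as et al. guarantees the existence of a linear $(\ACC,\REJ,\snn)$-CSNSS protocol. This is exactly condition (b) of Corollary~\ref{coro:equiv}, so condition (d) holds as well: for every $\sff \geq 2$ there is a projected linear $(\ACC,\REJ,\snn,\sff)$-CSSPIR protocol. Since every projected linear CSSPIR protocol is in particular a CSSPIR protocol by Definition~\ref{defi:prj_lin_spir}, the existence claim follows.

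Unpacking the chain $(b)\Rightarrow(c)\Rightarrow(d)$ inside the equivalence corollary, the first step invokes Theorem~\ref{theo:NSS=MMSP} through Protocol~\ref{prot:NSStoMMSP} to extract from the linear CSNSS its underlying $(\ACC,\REJ,\snn)$-MMSP by reading off the matrix of the linear share-generation map together with the position map. The second step applies Theorem~\ref{theo:MMSPtoSPIR} through Protocol~\ref{prot:MMSPtoSPIR}, which for an arbitrary $\sff \geq 2$ assembles from the MMSP a projected linear $(\ACC,\REJ,\snn,\sff)$-CSSPIR protocol with the same position map and the same rate parameters as the original CSNSS. No new calculation is needed since both conversions and their correctness were already established in earlier sections.

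There is essentially no obstacle to overcome here; the only thing one should check is the compatibility of the hypotheses. Corollary~\ref{coro:equiv} requires $\snn \geq 2$, which is harmless since for $\snn = 1$ any admissible access structure $(\ACC,\REJ)$ is degenerate (either $\ACC = \{\{1\}\}$ and $\REJ \subseteq \{\emptyset\}$, or $\ACC = \emptyset$), and the trivial single-server SPIR protocol in which the server returns all files directly (or returns nothing) satisfies the required correctness and secrecy conditions. Thus the corollary holds for all $\snn$ and all $\sff \geq 2$.
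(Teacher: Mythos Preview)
Your proof is correct and follows exactly the paper's own argument: combine Proposition~\ref{prop:anyNSS} (existence of a linear $(\ACC,\REJ,\snn)$-CSNSS for any access structure) with Corollary~\ref{coro:equiv} to obtain a projected linear $(\ACC,\REJ,\snn,\sff)$-CSSPIR for every $\sff\ge 2$. The unpacking of $(b)\Rightarrow(c)\Rightarrow(d)$ via Theorems~\ref{theo:NSS=MMSP} and~\ref{theo:MMSPtoSPIR} is accurate but not needed beyond what the paper already states.

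One small caveat on your $\snn=1$ digression: the protocol ``server returns all files directly'' violates server secrecy, since the user would learn $M_{[\sff]\setminus\{K\}}$. The correct trivial protocol when $\REJ\subseteq\{\emptyset\}$ is for the user to send $K$ in the clear (user secrecy is vacuous over $\cB=\emptyset$) and for the server to return only $M_K$. The paper itself does not treat this edge case and implicitly works under $\snn\ge 2$, so this does not affect the substance of your argument.
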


\section{Proof of Conversion from SPIR to NSS} \label{sec:SPIRtoNSS}

In this section, we prove Theorem~\ref{theo:SPIRtoNSS} and Corollary~\ref{coro:SPIRtoNSS}.

\subsection{Proof of Theorem~\ref{theo:SPIRtoNSS}}
For the proof, we prepare the following lemma, which extends Lemma~\ref{lemm:seccrecty} to the incomplete secrecy case.

%

\begin{lemm}	\label{lemm:noleak}
For any $(\fA,\fB,\snn)$-SPIR protocol $\PSPIR$, any $\cB\in\fB$, and $r^{\ast}$ defined in \eqref{lemmeq:correct},
    we have
\begin{align}
&I(M_1 ; D_{\cB} |M_{[2:\sff]}, R=r^{\ast}, K=1) \\
&\leq 2 \beta +  (1-\alpha+4\sqrt{2\gamma\sff}) \log \smm + 2h_2(\sqrt{2\gamma\sff})
+ h_2(1-\alpha) + \log \alpha.
\end{align}
    
\end{lemm}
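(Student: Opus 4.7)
The plan is to upgrade Lemma~\ref{lemm:seccrecty} (the completely secure case) to the approximate setting by tracking the continuity errors introduced by each of $(\alpha,\beta,\gamma)$. The tools I will need are the chain rule of mutual information (together with the independence $M_1\perp M_{[2:\sff]}$), Fano's inequality applied to the correctness guarantee, and Pinsker's inequality combined with a Fannes-type continuity bound for mutual information, in order to turn the averaged user secrecy $I(K;Q_{\cB})\le\gamma$ into a closeness of the conditional laws of $(M_1,D_{\cB})$ as $K$ varies.

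First, using $M_1\perp M_{[2:\sff]}$ and the chain rule, I would split
\begin{align*}
& I(M_1;D_{\cB}\mid M_{[2:\sff]},R=r^\ast,K=1) \\
&\quad = I(M_1;D_{\cB}\mid R=r^\ast,K=1) \\
&\qquad + I(M_1;M_{[2:\sff]}\mid D_{\cB},R=r^\ast,K=1).
\end{align*}
The ``correlation'' summand equals $I(M_{[2:\sff]};D_{\cB}\mid M_1,R=r^\ast,K=1) - I(M_{[2:\sff]};D_{\cB}\mid R=r^\ast,K=1)$; the first of these is bounded by $\beta$ plus $H(M_1\mid D,R=r^\ast,K=1)$ (since $M_1\perp M_{[2:\sff]}$ before conditioning on $D$), and because $[\snn]\in\fA$, Fano's inequality on the ML decoder controls this conditional entropy by a correctness-based residual, which I expect to match the $(1-\alpha)\log\smm + h_2(1-\alpha)+\log\alpha$ combination in $\xi$ (obtained via a sharper form of Fano rather than the naive $\alpha\log\smm+h_2(\alpha)$).

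For the ``direct leakage'' summand $I(M_1;D_{\cB}\mid R=r^\ast,K=1)$, I would follow the perfect-security template of Lemma~\ref{lemm:seccrecty}: user secrecy should let me replace $K=1$ with some $K=k'\ne 1$, after which $M_1$ becomes a non-targeted file and $I(M_1;D_{\cB}\mid R,K=k')\le\beta$ by server secrecy. Quantitatively, Pinsker applied to $I(K;Q_{\cB})\le\gamma$ together with $K$ uniform on $[\sff]$ yields $\|P_{Q_{\cB}\mid K=k}-P_{Q_{\cB}}\|_1 = O(\sqrt{\gamma\sff})$ for each $k$ (the $\sqrt{\sff}$ factor arises from dividing by $\Pr[K=k]=1/\sff$), and two triangle-inequality invocations then give the constant $4\sqrt{2\gamma\sff}$; propagating this total-variation closeness through a continuity-of-mutual-information bound contributes the $(4\sqrt{2\gamma\sff})\log\smm + 2h_2(\sqrt{2\gamma\sff})$ terms in $\xi$. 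Summing the two summands then yields the claimed inequality. The main obstacle will be passing from the averaged user-secrecy bound to a pointwise statement at $R=r^\ast$: the $\argmin$ definition \eqref{lemmeq:correct} of $r^\ast$ interacts nontrivially with the $\max$ over $\cB$, and matching the precise constants (in particular the $4\sqrt{2\gamma\sff}$ and the somewhat unusual Fano residual $(1-\alpha)\log\smm + h_2(1-\alpha)+\log\alpha$) requires a careful averaging argument that avoids introducing spurious factors polynomial in $|\cR|$ or $|\fB|$.
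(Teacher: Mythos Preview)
Your plan coincides with the paper's proof: the paper establishes exactly the two pieces you isolate (its \eqref{eq:nol1} and \eqref{eq:nol2}), handling the ``direct leakage'' term via Pinsker plus the Alicki--Fannes continuity bound to swap $K=1$ for $K=2$ and then invoking the $\argmin$ definition of $r^\ast$ to pass from the $R$-average to $R=r^\ast$, and handling the ``correlation'' term via a Fano-type argument applied to $I(M_{[2:\sff]};M_1D\mid R=r^\ast,K=1)$. The one ingredient you have not yet identified is the source of the $\log\alpha$ contribution: it does not come from a sharpened Fano inequality but from the min-entropy\,/\,guessing-probability lower bound $H(M_1\mid M_{[2:\sff]},D,R=r^\ast,K=1)\ge -\log(\text{success probability})$, which the paper subtracts from the standard Fano upper bound on $H(M_1\mid D,R=r^\ast,K=1)$.
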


The proof of Lemma~\ref{lemm:noleak} is given in Appendix~\ref{append:noleak}.
Now, we prove Theorem~\ref{theo:SPIRtoNSS}. 

\begin{proof}[Proof of Theorem~\ref{theo:SPIRtoNSS}]
The correctness of the NSS protocol $\hPNSS[\PSPIR]$ is upper bounded by $\alpha$ because 
    we have
    \begin{align}
    \alpha(\hPNSS[\PSPIR]) &\stackrel{\mathclap{(a)}}{=} 
        \max_{\cA\in\fA} \pr_{M_1,D_{\cA}} [M_1 \neq \hat{m}^{\mathrm{ML}}_{D_{\cA},r^{\ast},k=1}] \\
         &\stackrel{\mathclap{(b)}}{\leq} \alpha(\PSPIR),
    \end{align}
    where $(a)$ follows from the definition of Protocol~\ref{prot:SPIRtoSS} and $(b)$ follows the correctness condition of the SPIR protocol $\PSPIR$.
The secrecy $\beta(\hPNSS[\PSPIR])$ is upper bounded as
    \begin{align}
    &I(M_1 ; D_{\cB} |M_{[2:\sff]}=\mast, R=r^{\ast}, K=1) \\
    &\stackrel{\mathclap{(a)}}{\leq} I(M_1 ; D_{\cB} |M_{[2:\sff]}, R=r^{\ast}, K=1) \\
    &\stackrel{\mathclap{(b)}}{\leq} 2 \beta +  (1-\alpha+4\sqrt{2\gamma\sff}) \log \smm + 2h_2(\sqrt{2\gamma\sff}) + h_2(1-\alpha) + \log \alpha \\ 
    & = \xi(\alpha,\beta,\gamma),
    \end{align}
    where 
        $(a)$ follows from the choice of $\mast$ in \eqref{eq:m2nast} and
        $(b)$ follows from Lemma~\ref{lemm:noleak}.
Thus, the protocol $\hPNSS[\PSPIR]$ is an $(\REC,\COL,\snn,\sff)$-NSS protocol with the desired security parameters $(\alpha,\xi(\alpha,\beta,\gamma))$.
The SS rate of Protocol~\ref{prot:SPIRtoSS} is 
	\begin{align}
	&\RSS =  
		\frac{\snn\log \smm}{\sum_{j\in[\snn]} \log |\cS_j|}
		= 
		\frac{\snn\log \smm}{\sum_{j\in[\snn]} \log |\cD_j|}
		= \RPIR,
		\label{eq:label2}
	\end{align}
	which proves Theorem~\ref{theo:SPIRtoNSS}. 
\end{proof}

\subsection{Proof of Theorem~\ref{coro:SPIRtoNSS}}

The proof of Corollary~\ref{coro:linearSPIRtoNSS} is as follows.
The conditions for the access structure, security, and rate follow from Theorem~\ref{theo:SPIRtoNSS}.
Thus, it is enough to prove that the NSS protocol $\hPNSS[\PSPIR]$ is linear. 
To prove the linearity, we first analyze  the simulation of the linear SPIR protocol $\PSPIR$ in Protocol~\ref{prot:SPIRtoSS}, and then,
    we prove that the NSS protocol $\hPNSS[\PSPIR]$ is linear.

First, we analyze Protocol~\ref{prot:SPIRtoSS} with parameters defined in Corollary~\ref{coro:linearSPIRtoNSS}.
In Corollary~\ref{coro:linearSPIRtoNSS}, the linear CSSPIR protocol $\PSPIR$ is simulated with $K=1$, $R=0$, and $M_{[2:\sff]}=0$.
Since the query is determined by $K$ and $R$, the simulated query is fixed as a matrix $\sQ \in \Fq^{\sz\times \sff\sx}$.
We denote $\sQ = (\sQ', \sQ'') \in \Fq^{\sz \times \sx} \times \Fq^{\sz\times (\sff-1)\sx}$.
With the uniform randomness $\RandSPIR\in\Fq^{\sy}$ and the randomness encoder $\sH \in \Fq^{\sz\times \sy}$ defined in Definition~\ref{defi:lin_spir},
the answers of $\PSPIR$ are written as 
\begin{align}
\begin{pmatrix}
D_1\\
\vdots\\
D_{\snn}
\end{pmatrix}
&= \sQ M + T
\stackrel{\mathclap{(a)}}{=}
 \sQ' M_1 + \sH \RandSPIR
    = (\sQ' ,  \sH )
    \begin{pmatrix}
    M_1\\
    \RandSPIR
    \end{pmatrix} 
    \in \Fq^{\sz},
    \\
D_{j}
    &= (\sQ' ,  \sH)_{\tau^{-1}(j)}
    \begin{pmatrix}
    M_1\\
    \RandSPIR
    \end{pmatrix}  
    \in \Fq^{|\tau^{-1}(j)|},
    \label{share}
\end{align}
where 
$(a)$ follows from the condition $M_2 = \cdots =  M_\sff = 0$ of Corollary~\ref{coro:linearSPIRtoNSS}.


Next, we prove that the NSS protocol $\hPNSS[\PSPIR]$ is linear.
The shares of $\hPNSS[\PSPIR]$ are generated as $S_j = D_j$ in \eqref{share} while the secret $L$ of NSS is embedded as $M_1 = L \in \Fq^{\sx}$.
Thus, $\hPNSS[\PSPIR]$ corresponds to the linear NSS protocol with 
    the dealer's private randomness $\RandNSS =  \RandSPIR \in \Fq^{\sy}$,
    linear encoder  
    $(\sQ' ,  \sH )  \in \Fq^{\sz \times (\sx+\sy)}$,
    and 
    the same position map $\tau$ as $\PSPIR$.

\section{Proof of Conversion from linear CSNSS to MMSP} \label{sec:NSS=MMSP}




%
%

In this section, 
    we prove Theorem~\ref{theo:NSS=MMSP}, i.e.,
	the MMSP $\hPMMSP[\PNSS]$ defined in Protocol~\ref{prot:NSStoMMSP} from a linear $(\ACC,\REJ,\snn)$-CSNSS protocol is an $(\ACC,\REJ,\snn)$-MMSP.
Before the proof, 
	we prepare the following proposition and lemma.
	
\begin{prop} \label{prop:rvrank}
For any random variable $X \in \FF_q^{n}$ and
	$\sA \in\FF_q^{m\times n}$,
we have
\begin{align}
H(\sA X) \leq \rank \sA \log q. \label{eq:rankss}
\end{align}
When the distribution of $X$ is uniform, the equivalence of \eqref{eq:rankss} holds.
\end{prop}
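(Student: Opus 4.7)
The plan is to bound the entropy by the logarithm of the size of the support of $\sA X$, which lies in the image of the linear map $\sA$. Writing $r = \rank \sA$, the image $\mathrm{Im}(\sA) \subset \FF_q^m$ is an $r$-dimensional subspace over $\FF_q$, so $|\mathrm{Im}(\sA)| = q^r$. Since $\sA X$ is supported on $\mathrm{Im}(\sA)$ regardless of the distribution of $X$, the standard bound $H(Y) \leq \log|\mathrm{supp}(Y)|$ immediately yields
\begin{align}
H(\sA X) \leq \log q^{\rank \sA} = \rank \sA \log q,
\end{align}
which establishes the inequality \eqref{eq:rankss}.

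For the equality claim when $X$ is uniform on $\FF_q^n$, I would argue that $\sA X$ is then uniform on $\mathrm{Im}(\sA)$. The key observation is that for every $y \in \mathrm{Im}(\sA)$, the fiber $\sA^{-1}(y) = \{x \in \FF_q^n : \sA x = y\}$ is a coset of $\ker(\sA)$, and therefore has cardinality $|\ker(\sA)| = q^{n - r}$, independent of $y$. Hence for every $y \in \mathrm{Im}(\sA)$,
\begin{align}
\Pr[\sA X = y] = \frac{q^{n-r}}{q^n} = q^{-r},
\end{align}
so $\sA X$ is uniformly distributed on a set of size $q^r$, and $H(\sA X) = r \log q = \rank \sA \log q$, attaining equality in \eqref{eq:rankss}.

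There is no substantive obstacle here; the argument is a direct combination of the linear-algebraic fact that the image of $\sA$ has size $q^{\rank \sA}$ and that the fibers of a linear map are cosets of the kernel, together with the elementary entropy bound $H(Y) \leq \log|\mathrm{supp}(Y)|$ with equality iff $Y$ is uniform on its support. The only point worth stating carefully is that uniformity of $X$ pushes forward to uniformity of $\sA X$ on $\mathrm{Im}(\sA)$, which is exactly what the fiber-size computation gives.
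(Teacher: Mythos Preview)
Your proof is correct. The paper states this proposition without proof, treating it as a standard fact; your argument via the image size and the coset structure of the fibers is exactly the natural justification and fills in what the paper omits.
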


\begin{lemm} \label{prop:rej_prop}
The rejection condition of an MMSP $\cP$ is equivalent to 
    \begin{align}
	\rank \sG_{\tau^{-1}(\cB)}''
	= \rank \sG_{\tau^{-1}(\cB)}
	\quad \forall\cB\in\REJ.
	\label{eq:rej_prop}
	\end{align}
\end{lemm}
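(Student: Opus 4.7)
The plan is to translate the geometric rejection condition $\spann\cE \cap \rowspan \sG_{\tau^{-1}(\cB)} = \{0\}$ into a statement about left null spaces, and then convert this to a rank equality via rank-nullity.

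First, I would unfold the rejection condition. Every vector in $\rowspan \sG_{\tau^{-1}(\cB)}$ has the form
\begin{align}
v\sG_{\tau^{-1}(\cB)} = \bigl(v\sG'_{\tau^{-1}(\cB)},\; v\sG''_{\tau^{-1}(\cB)}\bigr)
\end{align}
for some row vector $v \in \Fq^{|\tau^{-1}(\cB)|}$. By definition of $\cE = \{\mathbf{e}_1,\ldots,\mathbf{e}_\sx\}$, such a vector lies in $\spann\cE$ if and only if its last $\sy$ coordinates vanish, i.e., $v\sG''_{\tau^{-1}(\cB)} = 0$. Therefore $\cP$ rejects $\cB$ if and only if every $v$ with $v\sG''_{\tau^{-1}(\cB)}=0$ also satisfies $v\sG'_{\tau^{-1}(\cB)}=0$, equivalently $v\sG_{\tau^{-1}(\cB)}=0$.

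Second, I would rephrase this as a containment of left null spaces. Writing $N''$ and $N$ for the left null spaces of $\sG''_{\tau^{-1}(\cB)}$ and $\sG_{\tau^{-1}(\cB)}$ respectively, the rejection condition is $N'' \subseteq N$. The opposite containment $N \subseteq N''$ always holds because $\sG''_{\tau^{-1}(\cB)}$ is obtained by restricting $\sG_{\tau^{-1}(\cB)}$ to a subset of columns, so $v\sG_{\tau^{-1}(\cB)}=0$ implies $v\sG''_{\tau^{-1}(\cB)}=0$. Hence rejection is equivalent to $N = N''$.

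Finally, I would apply the rank-nullity theorem. Both $\sG_{\tau^{-1}(\cB)}$ and $\sG''_{\tau^{-1}(\cB)}$ have the same number of rows $|\tau^{-1}(\cB)|$, so $\dim N = |\tau^{-1}(\cB)| - \rank \sG_{\tau^{-1}(\cB)}$ and $\dim N'' = |\tau^{-1}(\cB)| - \rank \sG''_{\tau^{-1}(\cB)}$. Combined with the automatic inclusion $N \subseteq N''$, the equality $N = N''$ is equivalent to $\rank \sG''_{\tau^{-1}(\cB)} = \rank \sG_{\tau^{-1}(\cB)}$, which is exactly \eqref{eq:rej_prop}. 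There is essentially no obstacle here; the only point requiring a bit of care is the direction of the containment between $N$ and $N''$, which is easy once one notes that $\sG''$ is a column-submatrix of $\sG$.
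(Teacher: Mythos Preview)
Your proof is correct and takes a genuinely different route from the paper. The paper first recasts the rejection condition as the block-rank identity
\[
\sx + \rank \sG_{\tau^{-1}(\cB)} \;=\; \rank
\begin{pmatrix}
\sI_{\sx} & \sO_{\sx\times \sy} \\
\multicolumn{2}{c}{\sG_{\tau^{-1}(\cB)}}
\end{pmatrix}
\]
(essentially the inclusion--exclusion formula for $\dim(\spann\cE + \rowspan\sG_{\tau^{-1}(\cB)})$), and then uses elementary column operations to clear the $\sG'_{\tau^{-1}(\cB)}$ block and reduce this to $\sx + \rank \sG''_{\tau^{-1}(\cB)}$. You instead go through left null spaces: the intersection being trivial is exactly the containment $N''\subseteq N$, and since $N\subseteq N''$ is automatic (column restriction), rank--nullity finishes. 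Your argument is slightly more transparent about \emph{why} the condition is a rank equality, whereas the paper's block-matrix route keeps everything as rank computations and never names the null spaces. Both are short and rigorous; neither offers a real advantage in generality or difficulty.
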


\begin{proof}
First, from definition, the rejection condition of the MMSP $\cP$ is equivalent to
\begin{align}
\sx  + \rank\sG_{\tau^{-1}(\cB)}
=
\rank
\begin{pmatrix}
\sI_{\sx} & \sO_{\sx\times \sy} \\
\multicolumn{2}{c}{\sG_{\tau^{-1}(\cB)} } 
\end{pmatrix}
	\quad \forall\cB\in\REJ.
	\label{eq:equiv_sdresj}
\end{align}
Thus, we prove the equivalence between 
   \eqref{eq:rej_prop} and \eqref{eq:equiv_sdresj}.

The direction from \eqref{eq:rej_prop} to \eqref{eq:equiv_sdresj} is proved as
\begin{align}
&\sx  + \rank\sG_{\tau^{-1}(\cB)}
= \sx  + \rank\sG_{\tau^{-1}(\cB)}''\\
&=
\rank
\begin{pmatrix}
\sI_{\sx} & \sO_{\sx\times \sy} \\
\sO_{\sz\times \sx} & \sG_{\tau^{-1}(\cB)}''
\end{pmatrix}
=
\rank
\begin{pmatrix}
\sI_{\sx} & \sO_{\sx\times \sy} \\
\multicolumn{2}{c}{\sG_{\tau^{-1}(\cB)} } 
\end{pmatrix}
\end{align}
which implies \eqref{eq:equiv_sdresj}.
The direction from \eqref{eq:equiv_sdresj} to \eqref{eq:rej_prop} is proved as
\begin{align}
&\sx  + \rank\sG_{\tau^{-1}(\cB)}
\stackrel{\mathclap{(a)}}{=}
\rank
\begin{pmatrix}
\sI_{\sx} & \sO_{\sx\times \sy} \\
\multicolumn{2}{c}{\sG_{\tau^{-1}(\cB)} } 
\end{pmatrix}
\\
&=
\rank
\begin{pmatrix}
\sI_{\sx} & \sO_{\sx\times \sy} \\
\sO_{\sz\times \sx} & \sG_{\tau^{-1}(\cB)}''
\end{pmatrix}
= \sx  + \rank\sG_{\tau^{-1}(\cB)}'',
\end{align}
where $(a)$ follows from the rejection condition.
Thus, we obtain the desired statement.
\end{proof}

Now, we prove Theorem~\ref{theo:NSS=MMSP}.
\begin{proof}[Proof of Theorem~\ref{theo:NSS=MMSP}]
The rates of $\hPMMSP[\PNSS]$ and $\PNSS$ are trivially $\sx/\sz$.
In the following, we separately prove that $\hPMMSP[\PNSS]$ accepts $\fA$ and rejects $\fB$.

First, we prove that the MMSP $\hPMMSP[\PNSS]$ accepts $\ACC$.
Let $\cA \in \ACC$.
Then, the completely secure correctness condition of the linear CSNSS guarantees that
    there exists a function $h$ such that $h(S_{\cA}) = L$, i,e., 
	\begin{align}
	h(S_{\cA}) = h\paren*{\sG_{\tau^{-1}(\cA)} 
		\begin{pmatrix}
		L \\
		R
		\end{pmatrix}
		} 
	 = L.
	 \label{eq:liaaa}
	\end{align}
	Since \eqref{eq:liaaa} holds for any $L\in\FF_q^{\sx}$ and $R\in\FF_q^{\sy}$,
		$h$ is written as a linear map from $\Im \sG_{\tau^{-1}(\cA)}$ to $\FF_q^{\sx}$ and
			the associated matrix $\sK_h\in \FF_q^{\sx \times |\tau^{-1}(\cA)|}$ of the function $h$ satisfies
	\begin{align}
	\sK_h \sG_{\tau^{-1}(\cA)} =  
		(\sI_{\sx},  \sO_{\sx\times \sy}),
	\end{align}
	where $\sI_{\sx}$ is the $\sx\times\sx$ identity matrix and $\sO_{\sx\times \sy}$ is the $\sx\times \sy$ zero matrix.
	Thus, the row space of $\sG_{\tau^{-1}(\cA)}$ includes the row space of $(\sI_{\sx}, \sO_{\sx\times \sy})$, 
		which implies that the MMSP $\hPMMSP[\PNSS]$ accepts $\ACC$.  
%
%

Next, we prove that the MMSP $\hPMMSP[\PNSS]$ rejects $\fB$.
We denote $\sG_{f}  = (\sG_{f}', \sG_{f}'') \in \FF_q^{\sz\times \sx} \times \FF_q^{\sz\times \sy}$.
From Lemma~\ref{prop:rej_prop}, the fact that $\hPMMSP[\PNSS]$ rejects $\REJ$ is equivalent to 
    \begin{align}
	\rank \sG_{f,\tau^{-1}(\cB)}''
	= \rank \sG_{f,\tau^{-1}(\cB)}
	\quad \forall\cB\in\REJ.
	\label{eq:MMSPEQVDL}
	\end{align}
Thus, we prove \eqref{eq:MMSPEQVDL} as follows:
For any $\cB\in\REJ$,
	the MMSP $\hPMMSP[\PNSS]$ satisfies
	\begin{align}
	&\rank \sG_{f,\tau^{-1}(\cB)} \log q
	\stackrel{\mathclap{(a)}}{=} H(S_\cB)
	\stackrel{\mathclap{(b)}}{=}
	H(S_\cB | L = \bm{\ell})\\
	&= H(\sG_{f,\tau^{-1}(\cB)}' L + \sG_{\tau^{-1}(\cB)}'' R | L = \bm{\ell})\\
	&= H(\sG_{f,\tau^{-1}(\cB)}'' R)
	\stackrel{\mathclap{(c)}}{=} \rank \sG_{f,\tau^{-1}(\cB)}'' \log q
	\label{eq:simHrank2}
	\end{align}
	where $(a)$ and $(c)$ follow from Proposition~\ref{prop:rvrank} and the uniform randomness of $L$ and $R$,
	and $(b)$ follows 
	from the secrecy condition $ I(L ; S_\cB  ) = 0$ of the CSNSS protocol.
Therefore, the MMSP $\hPMMSP[\PNSS]$ rejects $\fB$.
\end{proof}

\section{Proof of Conversion from MMSP to projected  linear CSSPIR}
	\label{sec:NSStoSPIR}

In this section, 
	the SPIR protocol $\hPSPIR[\cP]$ defined in Protocol~\ref{prot:MMSPtoSPIR} from an $(\REC,\COL,\snn)$-MMSP $\cP$
		is an $(\REC,\COL,\snn,\sff)$-CSSPIR protocol.

For the proof, we prepare the following lemma.
\begin{lemm} \label{lemm:secrecyy}
Let $\cP  = (\sG, \tau)$ be an $(\ACC,\REJ,\snn)$-MMSP,
    $X\in\FF_q^\sx$ be a random vector,
    and $Y\in\FF_q^\sy$ be a uniform random vector.
For $\cX\in[\snn]$, we define 
\begin{align}
Z_{\tau^{-1}(\cX)}  =
			\sG_{\tau^{-1}(\cX)}  
\begin{pmatrix}
		X \\
		Y
\end{pmatrix}.
\end{align}
Then, the relation 
	\begin{align}	
	H(Z_{\tau^{-1}(\cB)}) = H(Z_{\tau^{-1}(\cB)} | X= \mathbf{x})
	\end{align} 
	holds for any $\cB\in\REJ$ and $\mathbf{x} \in \FF_q^\sx$, i.e., $I(Z_{\tau^{-1}(\cB)};X) = 0$.
\end{lemm}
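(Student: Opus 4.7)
The plan is to combine Proposition~\ref{prop:rvrank} with the rank characterization of the rejection condition given by Lemma~\ref{prop:rej_prop}. Write $\sG = (\sG', \sG'') \in \FF_q^{\sz\times\sx}\times\FF_q^{\sz\times\sy}$, so that for any $\cB\in\REJ$ we have $Z_{\tau^{-1}(\cB)} = \sG'_{\tau^{-1}(\cB)} X + \sG''_{\tau^{-1}(\cB)} Y$.

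First, I would compute $H(Z_{\tau^{-1}(\cB)}\mid X=\mathbf{x})$. Conditioned on $X=\mathbf{x}$, the term $\sG'_{\tau^{-1}(\cB)} \mathbf{x}$ is a constant shift and $Y$ remains uniform on $\FF_q^\sy$, so $H(Z_{\tau^{-1}(\cB)}\mid X=\mathbf{x}) = H(\sG''_{\tau^{-1}(\cB)} Y) = \rank \sG''_{\tau^{-1}(\cB)} \log q$ by the uniform case of Proposition~\ref{prop:rvrank}. In particular this value is independent of $\mathbf{x}$, so averaging gives $H(Z_{\tau^{-1}(\cB)}\mid X) = \rank \sG''_{\tau^{-1}(\cB)} \log q$.

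Next I would bound the unconditional entropy. Since $(X,Y)$ takes values in $\FF_q^{\sx+\sy}$, Proposition~\ref{prop:rvrank} (in its inequality form, not requiring uniformity of $(X,Y)$) gives
\begin{align}
H(Z_{\tau^{-1}(\cB)}) \le \rank \sG_{\tau^{-1}(\cB)} \log q.
\end{align}
Because $\cP$ rejects $\REJ$, Lemma~\ref{prop:rej_prop} yields $\rank \sG_{\tau^{-1}(\cB)} = \rank \sG''_{\tau^{-1}(\cB)}$, so $H(Z_{\tau^{-1}(\cB)}) \le \rank \sG''_{\tau^{-1}(\cB)} \log q = H(Z_{\tau^{-1}(\cB)}\mid X)$. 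Combining with the opposite inequality $H(Z_{\tau^{-1}(\cB)})\ge H(Z_{\tau^{-1}(\cB)}\mid X)$ from conditioning, the two entropies coincide, which gives $I(Z_{\tau^{-1}(\cB)};X)=0$ and, since $H(Z_{\tau^{-1}(\cB)}\mid X=\mathbf{x})$ is constant in $\mathbf{x}$, also the pointwise equality stated in the lemma.

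The only subtle step is ensuring that Proposition~\ref{prop:rvrank} is applicable in its \emph{inequality} form to the non-uniform vector $(X,Y)$; this is fine because the proposition as stated only needs uniformity for the equality case. Once that is noted, the argument is essentially bookkeeping: the rejection condition forces the $\sG'$-columns to be redundant modulo the column space of $\sG''$ on the rows $\tau^{-1}(\cB)$, so $X$ cannot contribute any fresh entropy to $Z_{\tau^{-1}(\cB)}$.
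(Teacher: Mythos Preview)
Your proof is correct and follows essentially the same route as the paper: both combine Lemma~\ref{prop:rej_prop} with Proposition~\ref{prop:rvrank} to equate $H(Z_{\tau^{-1}(\cB)}\mid X=\mathbf{x})$ with $\rank \sG_{\tau^{-1}(\cB)}\log q$ and then with $H(Z_{\tau^{-1}(\cB)})$. If anything, your version is slightly more careful, since you explicitly use the inequality form of Proposition~\ref{prop:rvrank} for the non-uniform pair $(X,Y)$ and close the sandwich via $H(Z_{\tau^{-1}(\cB)})\ge H(Z_{\tau^{-1}(\cB)}\mid X)$, whereas the paper invokes equality at that step without comment.
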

\begin{proof}
From Lemma~\ref{prop:rej_prop}, the fact that $\cP$ rejects $\REJ$
	is equivalent to 
	\begin{align}
	\rank \sG_{\tau^{-1}(\cB)}''
	= \rank \sG_{\tau^{-1}(\cB)}
	\quad \forall\cB\in\REJ.
	\label{eq:ranksame}
	\end{align}
From this relation, we obtain 
	\begin{align}
	&H(Z_{\tau^{-1}(\cB)} | X = \mathbf{x})
	= H(\sG_{\tau^{-1}(\cB)}' \mathbf{x} + \sG_{\tau^{-1}(\cB)}'' Y )\\
	&= H(\sG_{\tau^{-1}(\cB)}'' Y)
	\stackrel{\mathclap{(a)}}{=} \rank \sG_{\tau^{-1}(\cB)}'' \log q\\
	&\stackrel{\mathclap{(b)}}{=} \rank \sG_{\tau^{-1}(\cB)} \log q
	\stackrel{\mathclap{(c)}}{=} H(Z_{\tau^{-1}(\cB)} ), 
	\label{eq:simHrank}
	\end{align}
	where $(a)$ and $(c)$ follow from Proposition~\ref{prop:rvrank}
	and $(b)$ follows from \eqref{eq:ranksame}.
%
\end{proof}

Now, we prove Theorem~\ref{theo:MMSPtoSPIR}.
\begin{proof}[Proof of Theorem~\ref{theo:MMSPtoSPIR}]
From the definition of Protocol~\ref{prot:MMSPtoSPIR},
    it is clear that the SPIR protocol $\hPSPIR[\cP]$ is projected linear and 
    the PIR rate is $\RPIR = \sx/\sz = \RMMSP$.
Thus, in the following, we separately prove that the SPIR protocol $\hPSPIR[\cP]$ has completely secure  
  user secrecy, server secrecy, and correctness.

First, the user secrecy of $\hPSPIR[\cP]$ is proved as follows.
The queries indexed by $\cB\in\COL$ are written as 
			\begin{align}
			Q_{\tau^{-1}(\cB)} 
			=
			\sG_{\tau^{-1}(\cB)}  
		\begin{pmatrix}
		\sE_k \\
		R
		\end{pmatrix}.
			\end{align}
	From Lemma~\ref{lemm:secrecyy} and the uniform randomness of $R$, 
		the distribution of $Q_{\tau^{-1}(\cB)}$ does not depend on the value of $k$, which implies the completely secure user secrecy.

Server secrecy of $\hPSPIR[\cP]$ is proved as follows. 
The answers indexed by $\cX\subset [\snn]$ are written as 
	\begin{align}
	D_{\cX}
	&= Q_{\tau^{-1}(\cX)} M + \sG_{\tau^{-1}(\cX)}'' \RandSPIR \\
	&= \sG_{\tau^{-1}(\cX)}' \sE_k M + \sG_{\tau^{-1}(\cX)}'' (R M + \RandSPIR)\\
	&= \sG_{\tau^{-1}(\cX)}'  M_k + \sG_{\tau^{-1}(\cX)}'' \RandSPIR'
	\\
	&= \sG_{\tau^{-1}(\cX)}
			\begin{pmatrix}
			M_k\\
			\RandSPIR'
			\end{pmatrix}
	\in \FF_q^{|\cX|},
	\label{eq:serv_noinfo}
	\end{align}
	where $\RandSPIR'  \coloneqq R M + \RandSPIR \in \FF_q^{\sy}$.
Since $\RandSPIR$ is uniform randomness, $\RandSPIR'$ in \eqref{eq:serv_noinfo} is also uniformly random.
Thus, for any $\cX\in[\snn]$,
		the user's received information \eqref{eq:serv_noinfo} 
		does not depend on the files except for $M_k$, which implies the completely secure server secrecy.

Next, we prove the correctness of the SPIR protocol $\hPSPIR[\cP]$.
Since $\cP$ accepts $\ACC$, i.e., 
	the row space of $\sG_{\tau^{-1}(\cA)}$ contains $\mathbf{e}_1,\ldots, \mathbf{e}_{\sx}\in\Fq^{\sx+\sy}$ for any $\cA\in\ACC$,
	there exists a matrix $\sK[\cA]$ such that 
	\begin{align}
	\sK[\cA] \sG_{\tau^{-1}(\cA)} =  
		(\sI_{\sx},  \sO_{\sx\times \sy}),
	\end{align}
	where $\sI_{\sx}$ is the $\sx\times\sx$ identity matrix and $\sO_{\sx\times \sy}$ is the $\sx\times \sy$ zero matrix.
Since \eqref{eq:serv_noinfo} implies 
\begin{align}
	D_{\cA} &= \sG_{\tau^{-1}(\cA)}
			\begin{pmatrix}
			M_k\\
			\RandSPIR'
			\end{pmatrix},
\end{align}
	by applying $\sK[\cA]$ to the answers $\cD_{\cA}$,  
	the user obtains $\sK[\cA]\cD_{\cA} = M_k$ correctly.
\end{proof}

\section{Examples of Constructions}\label{sec:examples}

\subsection{A construction with access structure in Example~\ref{exam:as}}
    \label{sub:examples}

In this subsection, 
    we give 
        a simple example of a MMSP 
    with the general access structure in Example~\ref{exam:as}.
    
The access structure in Example~\ref{exam:as} is defined as 
    $\snn = 3$, $\fA = \{  \{2,3\}, \{1,2,3\} \}$, and $\fB = \{\emptyset,  \{1\},  \{2\}, \{3\}, \{1,2\} \}$.
We fix $\sx = 1$, $\sy =2$, and $\sz = 4$, and define an $(\ACC,\REJ,\snn)$-MMSP by 
\begin{align}
\sG &= 
\begin{pmatrix}
0 & 1 & 2 \\
1 & 1 & 1 \\
0 & 1 & 1 \\
1 & 1 & 0 
\end{pmatrix}
\in\FF_3^{4\times 3}
,
\\
\tau &= \{ 1\mapsto1,\ 2\mapsto2,\ 3\mapsto3,\ 4\mapsto3 \}.
\end{align}
Then, we can confirm that this MMSP $\cP = (\sG,\tau)$ accepts $\fA$ and rejects $\fB$ as follows. 
To confirm that $\cP$ accepts $\fA$, 
    it is enough to confirm if
    \begin{align}
    \cE = \{(1,0,0)\} \subset 
    \rowspan \sG_{\cA}
    \end{align}
    for $\cA = \{2,3\}$.
   Since  
     \begin{align}
    \sG_{\tau^{-1}( \{2,3\} )} 
    =
    \sG_{\{2,3,4\}} 
    = 
    \begin{pmatrix}
    1 & 1 & 1 \\
    0 & 1 & 1 \\
    1 & 1 & 0 
    \end{pmatrix}
    \end{align}
satisfies the above inclusion, $\cP$ accepts $\fA$.
To confirm that $\cP$ accepts $\fB$, 
    we need to confirm 
        $$\spann \cE \cap \rowspan \sG_{\tau^{-1}(\cB)} = \{0 \}$$ for $\cB = \{1,2\}$ and $\cB=\{3\}$.
Since both 
    \begin{align}
    \sG_{\tau^{-1}(\{1,2\})} &= 
    \sG_{\{1,2\}} = 
    \begin{pmatrix}
    0 & 1 & 2 \\
    1 & 1 & 1 \\
    \end{pmatrix},
    \\
    \sG_{\tau^{-1}(\{3\})} &= 
    \sG_{\{3,4\}}= 
    \begin{pmatrix}
    0 & 1 & 1 \\
    1 & 1 & 0 
    \end{pmatrix}
    \end{align}
satisfy the above property, $\cP$ rejects $\fB$.
Therefore, $\cP$ is an $(\ACC,\REJ,\snn)$-MMSP.
The rate of $\cP$ is $\RMMSP = \sx/\sz = 1/4$.

From this MMSP $\cP$ and 
    Protocols~\ref{prot:SPIRtoSS}, \ref{prot:NSStoMMSP}, and \ref{prot:MMSPtoSPIR},
    we can also obtain $(\ACC,\REJ,\snn,\sff)$-SPIR and $(\ACC,\REJ,\snn)$-NSS protocols for the access structure in Example~\ref{exam:as} and any $\sff\geq2$.
The induced SPIR and NSS protocols has the same rate $\sx/\sz = 1/4$ and randomness rate $\sy/\sx = 2$.

\subsection{Optimal construction of $(\srr,\stt,\snn)$-CSNSS and $(\srr,\stt,\snn,\sff)$-CSSPIR} \label{sec:optimal}

In this subsection, we construct $(\srr,\stt,\snn)$-CSNSS and $(\srr,\stt,\snn,\sff)$-CSSPIR protocols with optimal rates
	from a $(\srr,\stt,\snn)$-MMSP. 
The proposed protocols are already proposed by Yamamoto \cite{Yamamoto86}
	and by Tajeddine et al. \cite{TGKFH19} (as a special case), but we construct these protocols from MMSPs.


In the following, we assume that the size of the finite field $\FF_q$ is at least $\snn$.
We define an MMSP $\cP = (\sG, \tau)$
	by the Vandermonde matrix, which is also the generator matrix of a Reed-Solomon code \cite{RS60}.
Let $(\sx,\sy,\sz) = ( \srr-\stt , \stt , \snn) $ and $\tau = \id_{[\snn]}$.
Define 
\begin{align*}
\sG = 
	\begin{pmatrix}
	1 & \alpha_1 & \cdots & \alpha_{\sx+\sy-1}	\\
	1 & \alpha_1^2 & \cdots & \alpha_{\sx+\sy-1}^2	\\
	\vdots & \vdots	 & \ddots & \vdots	\\
	1 & \alpha_1^{\sz} & \cdots & \alpha_{\sx+\sy-1}^{\sz}
	\end{pmatrix} 
	=
	\begin{pmatrix}
	1 & \alpha_1 & \cdots & \alpha_{\srr-1}	\\
	1 & \alpha_1^2 & \cdots & \alpha_{\srr-1}^2	\\
	\vdots & \vdots	 & \ddots & \vdots	\\
	1 & \alpha_1^{\snn} & \cdots & \alpha_{\srr-1}^{\snn}
	\end{pmatrix} 
	.
\end{align*}
Then, we can easily confirm that the MMSP $\cP$ accepts $\ACC = \{ \cA \subset [\snn] \mid |\cA| \geq \srr \}$
	and rejects 
	$\REJ = \{ \cB \subset [\snn] \mid |\cB| \leq \stt \}$ as follows.
For any permutation $\pi$ of $[\snn]$, 
    fix $\cA \coloneqq \{\pi(i) | i \in[\srr] \}\in\fA $
    and
    $\cB \coloneqq \{\pi(i) | i \in[\stt] \}\in\fB $.
The matrices
	\begin{align*}
 	\sS &=
	\sG_{\tau^{-1}(\cA)} = 
	\sG_{\cA} = 
	\begin{pmatrix}
	1 & \alpha_1^{\pi(1)} & \cdots & \alpha_{\srr-1}^{\pi(1)}	\\
	\vdots & \vdots	 & \ddots & \vdots	\\
	1 & \alpha_1^{\pi(\srr)} & \cdots & \alpha_{\srr-1}^{\pi(\srr)}
	\end{pmatrix} 
	\in\FF_q^{\srr\times\srr}
	,
	\\
	\sT &=
	\begin{pmatrix}
	\sI_{\srr-\stt}  & 0	\\
	\multicolumn{2}{c}{\sG_{\tau^{-1}(\cB)} }
	\end{pmatrix}
	=
	\begin{pmatrix}
	\sI_{\srr-\stt}  & 0	\\
	\multicolumn{2}{c}{\sG_{\cB} }
	\end{pmatrix}
	=
	\begin{pmatrix}
	\multicolumn{3}{c}{\sI_{\srr-\stt}}  & 0	\\
	\hline
	1 & \alpha_1^{\pi(1)} & \cdots & \alpha_{\srr-1}^{\pi(1)}	\\
	\vdots & \vdots	 & \ddots & \vdots	\\
	1 & \alpha_1^{\pi(\stt)} & \cdots & \alpha_{\srr-1}^{\pi(\stt)}
	\end{pmatrix} 
	\in\FF_q^{\srr\times\srr}
	\end{align*}
	are invertible matrices.
We obtain the acceptance of $\cP$ because 
	the invertibility of $\sS$ implies that the row span of $\sS$ is $\FF_q^{\srr}$ and thus includes $\cE = \{ \mathbf{e}_1,\ldots, \mathbf{e}_{\srr-\stt} \}$.
We obtain the rejection of $\cP$ because 
    the row vectors of $\sT$ are linearly independent and 
    thus, 
    the span of the last $\stt$ row vectors does not include the span of $\cE$.

Thus, 
	the CSSPIR protocol $\hPSPIR[\cP]$ defined in Protocol~\ref{prot:MMSPtoSPIR} is 
		an $(\srr,\stt,\snn)$-SPIR protocol with the PIR rate $\RPIR = (\srr-\stt)/\snn$.
	A linear $(\srr,\stt,\snn)$-CSNSS protocol with the SS rate $\RSS = \srr-\stt$ can also be constructed by the equivalence in Corollary~\ref{coro:equiv}.
Given $\stt,\srr,\snn$,
	these protocols are optimal from Proposition~\ref{prop:rssup} and Corollary~\ref{coro:cap}. 

\section{Conclusion} \label{sec:conclusion}

We have studied the equivalence relation between non-perfect secret sharing and symmetric private information retrieval.
We have defined the two protocols with access structures, 
	which represent the authorized and forbidden sets for non-perfect secret sharing
		and the response and collusion patterns for symmetric private information retrieval.
We first showed that any SPIR protocols can be converted into NSS protocols.
From this relation, we proved an upper bound of CSSPIR capacity with arbitrary response and collusion patterns.
Next, 
	we proved the equivalence of linear CSNSS, linear CSSPIR, and MMSP.
From this implication, we obtained the existence of CSSPIR for any access structure.

%
%


%

\appendices

\section{Proof of Theorem~\ref{theo:MDS-MMSP}} \label{MMSP-MDS}

We separately prove $(a) \implies (b)$ and $(b) \implies (a)$ of Theorem~\ref{theo:MDS-MMSP}.

\noindent\textit{Step 1 $(a) \implies (b)$}:\quad 
It follows from the rejection condition of the MMSP that
        any $\stt$ rows of $\sG''$ are linearly independent. 
Thus, $\sG''$ is the generator matrix of an $(\snn,\stt)$-MDS code.
Also, we can show that any $\srr$ rows of $\sG$ are linearly independent
	from the acceptance condition of the MMSP and the MDS property of $\sG''$.
To be precise, without losing generality,
	we prove that $\sG_{[\srr]}\in\FF_q^{\srr\times \srr}$ is invertible as follows.

The acceptance condition implies $\spann \cE \subset \rowspan \sG_{[\srr]}$.
Thus, $\spann \cE \oplus \rowspan \sG_{[\stt]} \subset  \rowspan \sG_{[\srr]}$.
On the other hand, the rejection condition implies $\rowspan \sG_{[\stt]} \cap \rowspan \cE = \{0\}$.
Thus, we have $\srr = \dim (\spann \cE \oplus \rowspan \sG_{[\stt]}) 
	\leq  \dim \sG_{[\srr]} \leq 
	\srr$,
	which implies $\sG_{[\srr]}$ is invertible.

\noindent\textit{Step 2 $(b) \implies (a)$}:\quad 
We prove in the following that  $\sG$ and $\sG''$ with condition (b) satisfy the acceptance and rejection conditions of the MMSP $\cP$.
Since any $\srr$ rows of $\sG\in\FF_q^{\snn\times \srr}$ are linearly independent, 
    the space spanned by those rows is $\FF_q^{\srr}$, which implies the acceptance condition of the MMSP $\cP$.
    
Next, we prove the rejection condition of $\cP$ from the MDS property of $\sG''\in\FF_q^{\snn\times \stt}$ by contradiction.
Suppose that the rejection condition does not hold, i.e., 
    there exists 
        a row vector $\bx \in \Fq^{\snn}$ 
            and a set $\cB$ with $|\cB| =\stt$
            satisfying $0 \neq \bx \in \spann \cE \cap \rowspan \sG_\cB$. 
Since $\bx \in \spann \cE$, the last $\stt$ coordinates of $\bx$ are $0$.
On the other hand, since $\sG''_{\cB}\in\Fq^{\stt\times \stt}$ is invertible,  
    the last $\stt$ coordinates of $\bx$ are $0$ if and only if $\bx = 0$, which is a contradiction.
Thus, 
the rejection condition holds.

\section{Proof of Lemma~\ref{lemm:noleak}} \label{append:noleak}

In this section, we prove of Lemma~\ref{lemm:noleak}.
In the following subsections, we separately prove the following inequalities. 
\begin{align}
	&I(M_{[2:\sff]} ; M_1 D |R=r^{\ast}, K=1) \nonumber \\
	&\leq  
     \beta + h_2(1-\alpha) + (1-\alpha)\log \smm + \log \alpha, \label{eq:nol1} \\
	&I(M_1 ; D_{\cB} |R=r^{\ast}, K=1) \nonumber  \\
	&\leq \beta +  4\sqrt{2\gamma\sff} \log \smm + 2h_2(\sqrt{2\gamma\sff})
	\label{eq:nol2}.   
\end{align}
Then, with these two inequalities, we obtain the lemma as 
\begin{align}
& I(M_1 ; D_{\cB} | M_{[2:\sff]}, R=r^{\ast}, K=1) \\
&\le I(M ; D_{\cB} |R=r^{\ast}, K=1) \\
&= I(M_1 ; D_{\cB} |R=r^{\ast}, K=1) + I(M_{[2:\sff]} ; D |M_1 , R=r^{\ast}, K=1)\\
&= I(M_1 ; D_{\cB} |R=r^{\ast}, K=1) + I(M_{[2:\sff]} ; M_1 D |R=r^{\ast}, K=1)\\
&\leq \xi(\alpha,\beta,\gamma),
\end{align}
where the last inequality follows from \eqref{eq:nol1} and \eqref{eq:nol2}.
%
%

\subsection{Proof of \eqref{eq:nol1}}

For the proof of \eqref{eq:nol1}, we prove the following lemma.

\begin{lemm} \label{lemm:sdfsqwqqasc}
Suppose 
\begin{itemize}
\item $A,B$ are independent,
\item $B$ is recovered from $C$ with probability $\alpha$, and
\item $I(A;C) \le \beta$.
\end{itemize}
Then, we have
\begin{align}
I(A;BC) \leq \beta +  h_2(1-\alpha) + (1-\alpha)\log |\cB|  + \log \alpha,
\end{align}
where $\cB$ is the space of $B$.
\end{lemm}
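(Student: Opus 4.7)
The plan is to apply the chain rule of mutual information and then Fano's inequality. First, I would decompose $I(A;BC) = I(A;C) + I(A;B\mid C)$. By the hypothesis $I(A;C)\le \beta$, the first term is bounded by $\beta$. Since $I(A;B\mid C) = H(B\mid C) - H(B\mid AC) \le H(B\mid C)$, the problem reduces to bounding $H(B\mid C)$ from above. The independence assumption $A \perp B$ is not actually needed for this reduction itself; it only ensures that no extra $I(A;B)$ term appears in alternative chain-rule decompositions such as $I(A;BC) = I(A;B) + I(A;C\mid B) = I(A;C\mid B)$.

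For the bound on $H(B\mid C)$ I would invoke the recoverability hypothesis, which supplies a decoder $\hat B(C)$ with $\Pr[\hat B(C) = B] = \alpha$. Letting $E = \mathbf{1}[B\neq \hat B(C)]$ so that $\Pr[E=1] = 1-\alpha$, data processing and the chain rule give
\[
H(B\mid C)\le H(B\mid \hat B) \le H(E) + H(B\mid E, \hat B).
\]
Using $H(E)\le h_2(1-\alpha)$, $H(B\mid E=0, \hat B) = 0$, and $H(B\mid E=1, \hat B)\le \log|\cB|$ then yields $H(B\mid C)\le h_2(1-\alpha) + (1-\alpha)\log|\cB|$, which combined with the first step gives $I(A;BC)\le \beta + h_2(1-\alpha) + (1-\alpha)\log|\cB|$.

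The main obstacle is arriving at the precise form stated in the lemma, which carries an additional $\log\alpha$ summand that the textbook Fano inequality does not produce. I expect this correction to enter through a deliberately loose intermediate step: for example, replacing one of the clean conditional-entropy bounds above by a Markov-type estimate on the conditional probabilities $P(B\mid \hat B)$ that picks up a $\log\alpha$ slack on the correct branch, or alternatively keeping the sharper form $(1-\alpha)\log(|\cB|-1)$ in Fano and re-expanding it via an inequality involving $\log\alpha$. Matching the stated right-hand side exactly is the delicate accounting step; the overall structural argument is just chain rule plus Fano, and the remaining routine computation should assemble the stated bound without further ideas.
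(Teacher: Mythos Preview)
Your structural decomposition is exactly what the paper does: write $I(A;BC)=I(A;C)+H(B\mid C)-H(B\mid AC)$, bound $I(A;C)\le\beta$, and control $H(B\mid C)$ by Fano's inequality. Where you diverge is in the handling of $H(B\mid AC)$. You drop it via the trivial bound $H(B\mid AC)\ge 0$, obtain the inequality without the $\log\alpha$ term, and then speculate that $\log\alpha$ must enter through a deliberately loose Fano step or a re-expansion of $(1-\alpha)\log(|\cB|-1)$. That guess is not the mechanism.

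In the paper the $\log\alpha$ summand comes precisely from a \emph{nontrivial lower bound} on $H(B\mid AC)$: invoking the min-entropy/guessing-probability relation (references \cite{BCHLW06,KRS09} in the paper), the authors assert $H(B\mid AC)\ge -\log\alpha$, so that subtracting $H(B\mid AC)$ contributes $+\log\alpha$ to the upper bound. Since $\log\alpha\le 0$, this makes the stated bound \emph{tighter} than the one you derived, not looser---there is no slack being inserted, contrary to your expectation. The independence hypothesis $A\perp B$ is not used in the paper's argument either. To reproduce the paper's proof you should replace your trivial lower bound on $H(B\mid AC)$ by this min-entropy estimate; whether that estimate is in fact justified by the hypotheses exactly as stated (a decoder from $C$ with success probability $\alpha$ gives $p_{\mathrm{guess}}(B\mid C)\ge\alpha$, hence $p_{\mathrm{guess}}(B\mid AC)\ge\alpha$, which bounds $H_{\min}$ from \emph{above}, not below) is something you may want to examine critically.
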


By applying the above lemma for the case of 
    $(A,B,C) = (M_{[2:\sff]}, M_1, D)$ while conditioning $R=r^{\ast}, K=1$,
 Eq.~\eqref{eq:nol1} is obtained as 
    \begin{align}
    &I(M_{[2:\sff]} ; M_1 D_{\cB} |R=r^{\ast}, K=1) \\
    &\leq I(M_{[2:\sff]} ; M_1 D |R=r^{\ast}, K=1) \\
    &\leq \beta + h_2(1-\alpha) + (1-\alpha)\log \smm + \log \alpha.
    \end{align}

\begin{proof}[Proof of Lemma~\ref{lemm:sdfsqwqqasc}]
From Fano's inequality, we have
\begin{align}
H(B|C)  \leq  \beta +  h_2(1-\alpha) + (1-\alpha)\log |\cB| . 
\end{align}
On the other hand, from the lower bound of the guessing probability \cite{BCHLW06,KRS09}, we have 
\begin{align}
H(B|AC) \ge -\log \alpha .
\end{align}
Thus, combining the above two inequalities, we obtain the desired inequality as
\begin{align}
I(A;BC) &= 
    I(A;C) + I(A;B|C)\\
    &= I(A;C) + H(B|C) - H(B|AC)\\
    &\leq 2 \beta +  h_2(1-\alpha) + (1-\alpha)\log |\cB| + \log \alpha.
\end{align}

\end{proof}

\subsection{Proof of \eqref{eq:nol2}}

For the proof, we define the variational distance
\begin{align}
d(p,q) \coloneqq \sum_{x}  |p(x)-q(x)|.
\end{align}
Throughout this section,
	we denote the distribution of a random variable $X$ by $P_X$ and the probability $P_X(x)$ by $P_x$, i.e., subscript with the lowercase letter of $X$.

First, we prepare the following proposition.
%
%
\begin{prop}[Classical Alicki-Fannes inequality {\cite{AF04}}] \label{prop:AF_classical}
Let $X,Y$ be random variable on $\cX, \cY$
	 and $p,q$ be probability distributions on $\cX\times \cY$ such that $\epsilon \coloneqq d(p,q) = \sum_{x,y}  |p(x,y)-q(x,y)| $.
Then,	
	\begin{align}
	|H(p_{X|Y}) - H(q_{X|Y}) | \leq 4\epsilon \log |\cX| + 2h_2(\epsilon).
	\end{align}
\end{prop}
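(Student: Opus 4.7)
The plan is to adapt the classical Alicki--Fannes argument. First I would decompose $p-q$ into its positive and negative parts: write $(p-q)(x,y) = r_+(x,y) - r_-(x,y)$, where $r_+, r_-$ are nonnegative measures on $\cX \times \cY$ with disjoint supports and equal total mass $\|r_+\|_1 = \|r_-\|_1 = \epsilon/2$. Setting $\delta := \epsilon/2$ and normalizing $\bar r_\pm := r_\pm/\delta$, one introduces the common ``majorant''
\[
\tilde s := \frac{p + r_-}{1+\delta} = \frac{q + r_+}{1+\delta},
\]
which is a genuine probability distribution on $\cX \times \cY$ and dominates both $p$ and $q$.

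Next I would exploit the two convex decompositions
\[
\tilde s = \tfrac{1}{1+\delta}\, p + \tfrac{\delta}{1+\delta}\, \bar r_-, \qquad \tilde s = \tfrac{1}{1+\delta}\, q + \tfrac{\delta}{1+\delta}\, \bar r_+,
\]
together with two standard properties of conditional entropy viewed as a functional of the joint distribution $\mu$: (i) concavity, $H_\mu(X|Y) \ge t H_{\mu_1}(X|Y) + (1-t) H_{\mu_2}(X|Y)$ whenever $\mu = t\mu_1 + (1-t)\mu_2$, and (ii) the matching upper bound $H_\mu(X|Y) \le t H_{\mu_1}(X|Y) + (1-t) H_{\mu_2}(X|Y) + h_2(t)$. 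Both follow by introducing a binary ``which-component'' variable $T$ and using $H(X|Y,T) \le H(X|Y) \le H(X|Y,T) + H(T)$. Applying (i) and (ii) to each of the two decompositions and rearranging yields
\[
H_p(X|Y) \in \bigl[(1+\delta)H_{\tilde s}(X|Y) - \delta H_{\bar r_-}(X|Y) - (1+\delta)h_2(\tfrac{\delta}{1+\delta}),\ (1+\delta)H_{\tilde s}(X|Y) - \delta H_{\bar r_-}(X|Y)\bigr],
\]
and the symmetric statement for $H_q(X|Y)$ with $\bar r_-$ replaced by $\bar r_+$.

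Subtracting these cancels $H_{\tilde s}(X|Y)$ and gives
\[
|H_p(X|Y) - H_q(X|Y)| \le \delta\,|H_{\bar r_+}(X|Y) - H_{\bar r_-}(X|Y)| + (1+\delta)\,h_2\!\left(\tfrac{\delta}{1+\delta}\right) \le \tfrac{\epsilon}{2}\log|\cX| + 2 h_2(\epsilon),
\]
where the first term uses $0 \le H(X|Y) \le \log|\cX|$, and the second uses monotonicity and concavity of $h_2$ on $[0,1/2]$ together with $\delta = \epsilon/2 \le \epsilon$ and $1+\delta \le 2$. This bound is strictly stronger than the stated $4\epsilon\log|\cX| + 2 h_2(\epsilon)$ and hence implies it.

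The main obstacle is careful bookkeeping of the constants. Qualitatively the decomposition plus concavity/subadditivity gives the right shape immediately, but one must verify both directions of the conditional-entropy inequality and then control the $h_2$-term under the change of variables $\delta \leftrightarrow \epsilon$. The looser factor $4$ appearing in the statement is simply a clean match to the quantum Alicki--Fannes bound from which this classical version is quoted; the argument above recovers a sharper coefficient.
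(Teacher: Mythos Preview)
Your argument is correct and self-contained; the decomposition into $r_\pm$, the construction of the common majorant $\tilde s$, and the two-sided bounds on conditional entropy via the auxiliary mixing variable $T$ are all sound, and the final bookkeeping indeed yields the sharper bound $(\epsilon/2)\log|\cX| + 2h_2(\epsilon)$.

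The paper, however, does not prove this proposition at all. It simply quotes the quantum Alicki--Fannes inequality from \cite{AF04} and remarks that the classical statement follows by specializing quantum systems and states to classical random variables. So your approach is genuinely different: you reproduce the Alicki--Fannes mechanism directly in the classical setting rather than appealing to the quantum result as a black box. What your route buys is a fully elementary, self-contained proof (no quantum machinery) and, as you note, a tighter leading constant; what the paper's route buys is brevity, since the proposition is only an auxiliary tool and the constants do not matter for its application in Lemma~\ref{lemm:noleak}.
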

\begin{remark}
In \cite{AF04}, Proposition~\ref{prop:AF_classical} is originally proved for quantum systems and states.
By restricting the quantum systems and states as classical systems and random variables, 
	we directly obtain Proposition~\ref{prop:AF_classical}.
\end{remark}

Now, we prove \eqref{eq:nol2}.
\begin{proof}[Proof of  \eqref{eq:nol2}]
We prove the lemma by two steps.

\noindent\textbf{Step 1}:\quad 
First, we prove 
\begin{align}
 d(P_{M_1D_{\cB}Q_{\cB} | K=1}, P_{M_1D_{\cB}Q_{\cB} | K=2}) \leq \sqrt{2\gamma\sff}
 \label{eq:dsdtlappe}
\end{align}
for any $\cB\in\COL$.
This inequality for the threshold case is proved in \cite[Lemma 5]{SH21-collude}.
With the similar idea, we give the proof of \eqref{eq:dsdtlappe}.
We have
\begin{align}
\gamma 
	&\geq I(K;Q_{\cB})\\
	&= D(P_{KQ_{\cB}}  \| P_{K} \times P_{Q_{\cB}})\\
	&\stackrel{\mathclap{(a)}}{=} \frac{1}{\sff} \sum_{k} D(P_{Q_{\cB}|K=k}  \| P_{Q_{\cB}})\\
	&\stackrel{\mathclap{(b)}}{\geq} \frac{2}{\sff} \sum_{k} d^2(P_{Q_{\cB}|K=k}  , P_{Q_{\cB}})\\
	&\geq  \frac{2}{\sff} d^2(P_{Q_{\cB}|K=k'}  , P_{Q_{\cB}}) \\
	&\stackrel{\mathclap{(c)}}{=}  \frac{2}{\sff}  d^2(P_{M_1D_{\cB}Q_{\cB}|K=k'}  , P_{M_1D_{\cB}Q_{\cB}})
		\label{eq:qeqfds}
\end{align}
for any $k' \in[\sff]$.
The equality $(a)$ follows from the uniform randomness of $K$,
    and 
    the inequality $(b)$ follows from Pinsker's inequality.
The equality $(c)$ follows from
	\begin{align}
	&d(P_{M_1D_{\cB}Q_{\cB}|K=k}  , P_{M_1D_{\cB}Q_{\cB}}) \\
	&= \frac{1}{2} \sum_{m_1,a_{\cB}, q_{\cB}, k}  | P_{m_1a_{\cB}q_{\cB}|k}  - P_{m_1a_{\cB}q_{\cB}} | \\
	&= \frac{1}{2} \sum_{m_1,a_{\cB}, q_{\cB}, k}  | P_{m_1a_{\cB}|kq_{\cB}} P_{q_{\cB}|k}  - P_{m_1a_{\cB}|q_{\cB}} P_{q_{\cB}}| \\
	&\stackrel{\mathclap{(d)}}{=} \frac{1}{2} \sum_{m_1,a_{\cB}, q_{\cB}, k}  | P_{m_1a_{\cB}|q_{\cB}} P_{q_{\cB}|k}  - P_{m_1a_{\cB}|q_{\cB}} P_{q_{\cB}}| \\
	&= \frac{1}{2} \sum_{q_{\cB}, k}  |  P_{q_{\cB}|k}  - P_{q_{\cB}}| \\
	&= d(P_{Q_{\cB}|K=k}  , P_{Q_{\cB}})
	\end{align}
	where the inequality $(d)$ holds since $K - Q_{\cB}- M_1D_{\cB}$ is a Markov chain.
Then, from Eq.~\eqref{eq:qeqfds} and the triangular inequality, we obtain \eqref{eq:dsdtlappe} as
	\begin{align}
	\sqrt{2\gamma\sff} &\geq 
	d(P_{M_1D_{\cB}Q_{\cB}|K=1} , P_{M_1D_{\cB}Q_{\cB}}) \\
	&\quad + 
	d(P_{M_1D_{\cB}Q_{\cB}|K=2} , P_{M_1D_{\cB}Q_{\cB}}) \\
	&\geq 
	d(P_{M_1D_{\cB}Q_{\cB}|K=1} , P_{M_1D_{\cB}Q_{\cB}|K=2} ).
	\end{align}


%

\noindent\textbf{Step 2}:\quad 
Next, we prove the desired inequality  
\begin{align}
I(M_1 ; D_{\cB} |R=r^{\ast}, K=1)  \leq \beta + g(\gamma, \smm),
\end{align}
where
\begin{align}
g(\gamma, \smm) &\coloneqq 4\sqrt{2\gamma\sff} \smm + 2h_2(\sqrt{2\gamma\sff}).
\end{align}
We have
\begin{align}
	&|I(M_1 ; D_{\cB} |R, K=1) - I(M_1 ; D_{\cB} |R, K=2) |\\
	&\stackrel{\mathclap{(a)}}{=} |I(M_1 ; D_{\cB} |Q_{\cB}, K=1) - I(M_1 ; D_{\cB} |Q_{\cB} ,K=2) |\\
	&= |H(M_1 |D_{\cB} Q_{\cB} , K=1) - H(M_1 | D_{\cB} Q_{\cB}, K=2) |	\\
	&\stackrel{\mathclap{(b)}}{\leq}  4\sqrt{2\gamma\sff} \log |\cM| + 2h_2(\sqrt{2\gamma\sff})\\
	& = g(\gamma,\smm),
		\label{eq:regaar}
\end{align}
where $(a)$ holds because $R-Q_{\cB}- M_iD_{\cB}$ is a Markov chain and 
    $(b)$ is obtained by combining Proposition~\ref{prop:AF_classical} and \eqref{eq:dsdtlappe}.
Rearranging the inequality \eqref{eq:regaar}, we have  
\begin{align}
		I(M_1 ; D_{\cB} |R, K=1) 
	&\leq I(M_1 ; D_{\cB} |R, K=2) + g(\gamma, \smm )	\nonumber \\
	&\leq I(M_{[\sff]\setminus 2} ; D |R, K=2) + g(\gamma, \smm ) \nonumber	\\
	&\stackrel{\mathclap{(c)}}{\leq} \beta + g(\gamma, \smm),
        \label{eq:lasteqqq}
\end{align}
where $(c)$ follows from the server secrecy of $\PSPIR$. Since $r^{\ast}$ is defined in \eqref{lemmeq:correct} to satisfy the inequality 
    \begin{align}
     & I(M_1 ; D_{\cB} |R= r^{\ast}, K=1) \\
     &\leq  \sum_{r\in\cR} P_R(r) I(M_1 ; D_{\cB} |R= r, K=1) \\
     &= I(M_1 ; D_{\cB} |R, K=1), 
    \end{align}
the inequality \eqref{eq:lasteqqq} derives the desired inequality 
    \begin{align}
    I(M_1 ; D_{\cB} |R= r^{\ast}, K=1) 
    \leq
    \beta + g(\gamma, \smm).
    \end{align}
\end{proof}


\end{document}